\definecolor{codegreen}{rgb}{0,0.6,0}
\definecolor{codegray}{rgb}{0.5,0.5,0.5}
\definecolor{codepurple}{rgb}{0.58,0,0.82}
\definecolor{backcolour}{rgb}{0.95,0.95,0.92}
\lstdefinestyle{mystyle}{
    backgroundcolor=\color{backcolour},   
    commentstyle=\color{codegreen},
    keywordstyle=\color{magenta},
    numberstyle=\tiny\color{codegray},
    stringstyle=\color{codepurple},
    basicstyle=\ttfamily\footnotesize,
    breakatwhitespace=false,         
    breaklines=true,                 
    captionpos=b,                    
    keepspaces=true,                 
    numbers=none,                    
    numbersep=5pt,                  
    showspaces=false,                
    showstringspaces=false,
    showtabs=false,                  
    tabsize=2,
    escapeinside={<@}{@>}
}
\newlist{inparaenum}{enumerate}{2}
\setlist[inparaenum]{nosep}
\setlist[inparaenum,1]{label=\bfseries\arabic*.}
\setlist[inparaenum,2]{label=\bfseries\roman*)}
\newcommand{\bM}{\boldsymbol{M}}
\newcommand{\bU}{\boldsymbol{U}}
\newcommand{\bV}{\boldsymbol{V}}
\newcommand{\bZ}{\boldsymbol{Z}}
\newcommand{\bS}{\boldsymbol{S}}
\newcommand{\btS}{\tilde{\bS}}
\newcommand{\btU}{\tilde{\bU}}
\newcommand{\btV}{\tilde{\bV}}
\newcommand{\ts}{\tilde{s}}
\newcommand{\tS}{\tilde{S}}
\newcommand{\tU}{\tilde{U}}
\newcommand{\tV}{\tilde{V}}
\newcommand{\btZ}{\widetilde{\bZ}}
\newcommand{\bhM}{\widehat{\bM}}
\newcommand{\bhtM}{\hat{\tilde{\bM}}}
\newcommand{\Ex}{\mathbb{E}}
\newcommand{\Rb}{\mathbb{R}}
\newcommand{\Reals}{\mathbb{R}}
\newcommand{\Nb}{\mathbb{N}}
\newcommand{\Zb}{\mathbb{Z}}
\newcommand{\diag}{\text{diag}}
\newcommand{\cI}{{\mathcal I}}
\newcommand{\trow}{t_{\text{row}}}
\newcommand{\tcol}{t_{\text{col}}}
\newtheorem{prop}{Proposition}[section]
\title{tspDB: Time Series Predict DB }
\author{\Name{Anish Agarwal} \Email{anish90@mit.edu}\\
\Name{Abdullah Alomar} \Email{aalomar@mit.edu}\\
\Name{Devavrat Shah} \Email{devavrat@mit.edu}\\
 \addr Massachusetts Institute of Technology}
\begin{document}
\maketitle
\begin{abstract}
A major bottleneck of the current Machine Learning (ML) workflow is the time consuming, error prone engineering required to get data from a datastore or a database (DB) to the point an ML algorithm can be applied to it. 
This is further exacerbated since ML algorithms are now trained on large volumes of data, yet we need predictions in real-time, especially in a variety of time-series applications such as finance and real-time control systems. 
Hence, we explore the feasibility of directly integrating prediction functionality on top of a data store or DB. 
Such a system ideally: 
(i)~provides an intuitive prediction query interface which alleviates the unwieldy data engineering; 
(ii)~provides state-of-the-art statistical accuracy while ensuring incremental model update, low model training time 
and low latency for making predictions. 
As the main contribution we explicitly instantiate a proof-of-concept, tspDB\footnote{An open source implementation 
of tspDB is available at \href{https://tspdb.mit.edu}{tspdb.mit.edu}.} which directly integrates with PostgreSQL. 
We rigorously test tspDB's statistical and computational performance against the state-of-the-art time series algorithms, including a Long-Short-Term-Memory (LSTM) neural network and DeepAR (industry standard deep learning library by Amazon). 
Statistically, on standard time series benchmarks, tspDB outperforms LSTM and DeepAR with 1.1-1.3x higher relative accuracy. 
Computationally, tspDB is 59-62x and 94-95x faster compared to LSTM and DeepAR in terms of median ML model training time and prediction query latency, respectively. 
Further, compared to PostgreSQL's bulk insert time and its SELECT query latency, tspDB is
slower only by 1.3x and 2.6x respectively. 
That is, tspDB is a real-time prediction system in that its model training / prediction query time is similar to just inserting / reading data from a DB.
As an algorithmic contribution, we introduce an incremental multivariate matrix factorization based time series method, which tspDB is built off.
We show this method also allows one to produce reliable prediction intervals by accurately estimating the time-varying variance of a time series, thereby addressing an important problem in time series analysis.

\end{abstract}

\newpage
\tableofcontents
\newpage


\section{Introduction}\label{sec:introduction}

\textbf{Data Engineering: Major Bottleneck in Modern ML Workflow.}
An important goal in the Systems for ML community has been to make ML more broadly accessible \cite{MLSys}.
Arguably, the major bottleneck towards this goal is not the lack of access to prediction algorithms, for which many excellent open-source ML libraries exist. 
Rather, it is the complex engineering and data processing required to take data from a datastore or database (DB) into a particular work environment format (e.g. spark data-frame) so that a prediction algorithm can be trained, and to do so in a scalable manner.
See Figure \ref{fig:proposed_system} for a visual depiction of the ML workflow as it stands; we aim to alleviate much of this `unwieldy' data engineering by building tspDB, a system that directly integrates prediction functionality on top of a DB in real-time. 

\iftoggle{FIGURES}{
\begin{figure}[htb]
	\centering
	\includegraphics[width=0.9\linewidth]{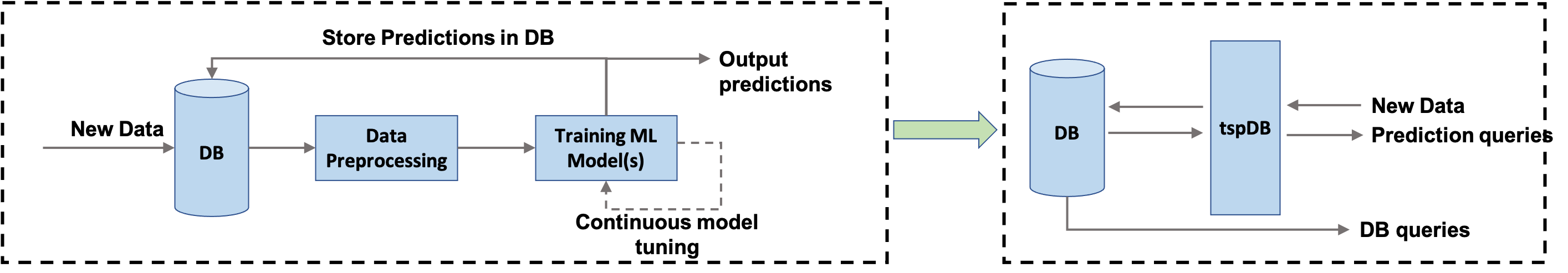}
	\caption{Pictorial depiction of data engineering---data transformation, feature extraction, model training/tuning---as a major bottleneck of the ML workflow. tspDB aims to alleviate it by direct integration of prediction functionality on top of a DB.}
	\label{fig:proposed_system}
\vspace{-3mm} 
\end{figure}

}

\medskip
\noindent
\textbf{The Goal.}  
Towards easing this bottleneck, and increasing accessibility of ML, our objective is to explore the feasibility of directly integrating a prediction system on top of the DB layer. 
This is in line with recent efforts to ``democratize" ML \citep{Tupaq, alpine}. 
Indeed, the authors in \cite{thecase} make a compelling case of the potential gains to be had by direct integration of predictive functionality on top of DBs---by drawing an analogy with the now scalable, robust and mature data management capabilities of DBs, they argue that a similar approach to the management of predictive models can lead to large improvements in both computational performance and accessibility of ML.

\hspace{4mm} In this work, we focus on multivariate time series data, that is, data where each unit (e.g. collection of stocks, temperature readings from a collection of sensors) is associated with a time stamp.
We do so as this is one of the primary ways data is structured in a range of important applications including finance, healthcare, retail, to name a few.
 We consider two types of prediction tasks for time series data:
(i) imputing a missing or noisy observation for a time series datapoint we do observe;
(ii) forecasting a time series data point in the future (i.e., data that is as of yet unobserved).

\medskip
\noindent
{\bf Statistical and Computational Performance Metrics to Benchmark ML Algorithms.}
As described in detail in \cite{MLSys}, given the growing demand to embed ML functionality in high-performance systems, especially in applications with time series data (e.g. financial systems, control systems), it is increasingly vital that we evaluate ML algorithms/systems not just through prediction accuracy, but through computational metrics as well. 
We consider two important computational metrics: (i) the time it takes to train a ML model; (ii) the time it takes to answer a prediction query, i.e., prediction query latency.
Throughout the paper, along with statistical accuracy, we will benchmark computational performance through these two computational metrics.

\begin{table}[tb]
\floatconts
{table:full}
{\caption{
Summary of statistical accuracy (Table \ref{tab:table1_b}), computational performance (Table \ref{tab:table1_a}), and functionalities (Table \ref{tab:table1_c}) of tspDB vs. other state-of-the-art time series algorithms. 
We use a weighted variant of the standard Borda Count (WBC) to evaluate statistical performance (precise definition in Appendix \ref{sec:score_metric}). WBC takes value in $[0,1]$: $0.5$ indicates equal performance to all other algorithms on average, $1$ (resp. $0$) indicates it ``infinitely'' superior (resp. inferior) performance compared to all others. 
}}
{
\vspace{-5mm} 
 \subtable[Summary of statistical accuracy results (see Section \ref{sec:experiments_stat} for details).][t]{%
 \label{tab:table1_b}%
\tabcolsep=0.305cm
\fontsize{8.5pt}{8.9pt}\selectfont
\centering
\begin{tabular}{@{}llcccccll@{}}
\toprule
                                                                           \multirow{4}{*}{\begin{tabular}[c]{@{}l@{}}\textbf{Statistical} \\ \textbf{Accuracy}  \\  \textbf{(WBC)}\end{tabular}}                 &                 & \multicolumn{5}{c}{\textbf{Forecasting}   }                                                  & \multicolumn{2}{c}{\textbf{Imputation}} \\ 
&  & \textbf{tspDB}  & \textbf{LSTM} & \textbf{DeepAR} & \textbf{TRMF} & \textbf{Prophet} & \textbf{tspDB}  & \textbf{TRMF} \\\cmidrule[0.5pt](lr){2-2} \cmidrule[0.5pt](r){3-7} \cmidrule[0.5pt](l){8-9}
                                                                                           & {Mean}            & \textbf{0.597} & 0.446         & 0.559           & 0.541         & 0.358            & \textbf{0.572} & 0.428         \\
                                                                                           & {Variance}        & \textbf{0.726} & N/A           & 0.274           & N/A           & N/A              & \textbf{0.597} & 0.403  \footnote{TRMF does not natively support variance imputation. We use an adapted version of TRMF since no method in the literature exists to do variance imputation to the best of our knowledge. Refer to Appendix \ref{sec:exp_appendix_methods} for details.}        \\ \bottomrule \\
\end{tabular}
}


 \subtable[Summary of computational  performance results (see Section \ref{sec:computational_experiments} for details).][t]{
 \label{tab:table1_a}
\tabcolsep=0.10cm
\fontsize{8.5pt}{8.9pt}\selectfont
\centering
\begin{tabular}{@{}cccccccc@{}}
\toprule
\multicolumn{1}{c}{}                       \multirow{3}{*}{\begin{tabular}[c]{@{}c@{}}\textbf{Computational} \\  \textbf{Performance}\end{tabular}}                                             & \textbf{Metric}                                                                      & \textbf{tspDB} & \textbf{LSTM} & \textbf{DeepAR} & \textbf{TRMF} & \textbf{Prophet} & \textbf{PostgreSQL} \\ \cmidrule[0.5pt](l){2-8}
& \begin{tabular}[c]{@{}c@{}}{Training/Insert  Time (seconds)}\end{tabular} & \textbf{9.00} & 556.1         & 531.7           & 13.6          & 3445.9           & 6.64                \\
                                                                                       & \begin{tabular}[c]{@{}c@{}}{Prediction/Querying  Time  (ms)}\end{tabular} & \textbf{3.45} & 325.2         & 326.2           & 5.28          & 1466.1           & 1.32                \\ \bottomrule  \\
\end{tabular}
}

 \subtable[Summary of tspDB functionalities compared to other time series forecasting algorithms.][t]{
  \label{tab:table1_c}
\tabcolsep=0.3cm
\fontsize{8.5pt}{8.9pt}\selectfont
\centering
\begin{tabular}{@{}llccccc@{}}
\toprule
                                 &                               & \textbf{tspDB} & \textbf{LSTM} & \textbf{DeepAR} & \textbf{TRMF} & \textbf{Prophet} \\ \cmidrule[0.5pt](l){2-7}
\multirow{3}{*}{\textbf{Functionalities}} & {Multivariate Time Series}    & \textbf{Yes}  & \textbf{Yes}  & \textbf{Yes}    & \textbf{Yes}  & No               \\
                                 & {Variance Estimation}         & \textbf{Yes}  & No            & \textbf{Yes}    & No            & No               \\
                                 & {Working with Missing Values} & \textbf{Yes}  & No            & \textbf{Yes}    & \textbf{Yes}  & \textbf{Yes}     \\ \bottomrule \\
\end{tabular}
\label{tab:table1_c}
}

}
\vspace{-5mm} 
\end{table}

%

\subsection{Contributions}
\textbf{tspDB.} 
The main contribution of this work is tspDB, a real-time prediction system for time series data, which aims to alleviate the data engineering bottleneck by providing direct access to predictive functionality for time series data (see Section \ref{sec:interface} for the prediction interface tspDB provides to users). 
We find tspDB outperforms the various popular existing time series libraries we compare against with respect to all three metrics we consider---statistical performance, speed of training the ML model and latency in making a prediction. 
See Table \ref{table:full} for a summary of the performance of tspDB, and the other time series algorithms we compare against, along these metrics.
Details of the precise experiments run to produce Table \ref{table:full} can be found in Section \ref{sec:experiments}.

%
{\bf Novel Multivariate Time Series Algorithm to Estimate Mean and Variance.}
tspDB is built of a novel time series prediction algorithm that we propose.
It is a generalization of the method proposed in \cite{sigmetrics} and extends it in the following two ways:
(i) it provides predictions for multivariate time series data, which performs significantly better than the original univariate algorithm (see Table \ref{table:multi_vs_uni});
(ii) it allows one to estimate the time-varying variance of a multivariate time series (a la GARCH-like models), thereby addressing an important issue in time series analysis of how to estimate the volatility of the time series.
We note this algorithm can be viewed as a variant of multivariate Singular Spectrum Analysis (mSSA).
Refer to Appendix \ref{sec:related_work} for detailed comparison with \cite{sigmetrics} and mSSA.
Details of the algorithm can be found in Section \ref{sec:overview}.

\hspace{4mm} 
An important feature of the proposed mean and variance estimation algorithm is that it is ``noise agnostic", i.e., it effectively imputes and forecasts both the time-varying mean and variance {\em regardless} of the noise model---e.g. Gaussian noise (continuous observations), Poisson and Binomial noise (integer observations), Bernoulli noise (binary observations).
tspDB's ability to accurately estimate the time-varying mean and variance of a time series and its robustness to different noise models leads to interesting applications.
For example, it allows one to verify whether or not a set of integer observations are generated from a Poisson process in a data-driven way as shown in Figure \ref{fig:poisson_vs_binomial}, a task that has received significant attention in the literature (see \cite{poisson1, poisson2, poisson3, poisson4}). 
See Section \ref{sec:exp_noise} for further empirical evidence of tspDB's robustness to noise.

\vspace{1mm}
{\em Theoretical Justification.} 
Given our focus on actually building a real-time time series prediction system and comprehensively testing its statistical and computational performance, a rigorous theoretical analysis of the proposed algorithm is beyond the scope of this paper.
%
%
However, in Section \ref{sec:justification} we provide intuitive formal justification of when and why this algorithm works.



\vspace{1mm}
{\em Computationally Efficient, Incremental Implementation of Algorithm in tspDB.} 
To ensure that tspDB achieves high computational performance (i.e., has low model training and prediction query time) without sacrificing statistical accuracy, we develop and implement a scalable, incremental variant of the  algorithm we propose in Section \ref{sec:overview}.
Details of this computationally efficient variant can be found in Section \ref{sec:inc_algorithm}.
To understand the statistical and computational tradeoffs in tspDB, we run extensive experiments on how the three metrics we consider tradeoff as we vary key hyper-parameters in our implementation of tspDB (details can be found in Section \ref{sec:appendix_stat_experiments}).

\begin{figure}[htbp]
\floatconts
{fig:poisson_vs_binomial}
{\caption{Does the time series follow a Poisson process? With no prior knowledge of the distribution, tspDB can be used to verify if integer observations are generated from a Poisson process (Figure \ref{fig:poisson_var}) by checking if the latent mean and variances are equal, or if they are different (e.g. Binomial in Figure \ref{fig:binomial_var}).}}
{
	\subfigure[][t]{\label{fig:poisson_var}
	\includegraphics[width=0.45\linewidth]{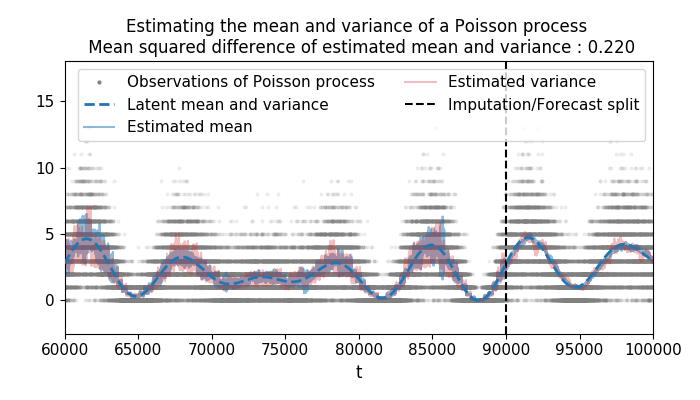}}
	 \qquad  	  \subfigure[][t]{\label{fig:binomial_var}
	\includegraphics[width=0.45\linewidth]{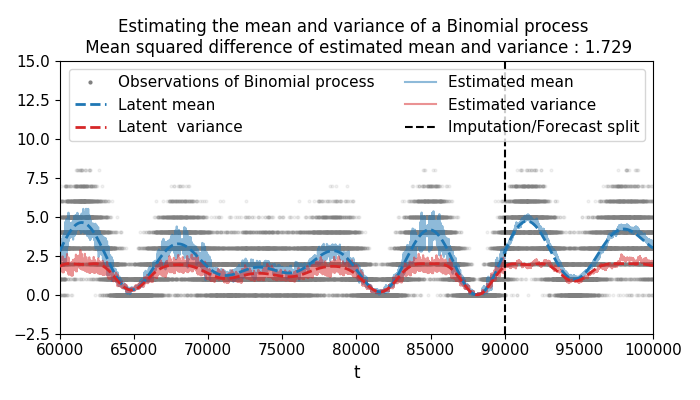}}
	\vspace{-5mm} 
}
\vspace{-2mm} 
\end{figure}

{\bf Extensive Statistical and Computational Benchmarking of tspDB.}
In Section \ref{sec:experiments}, we extensively benchmark tspDB's statistical and computational performance against popular state-of-the-art prediction libraries, and just its computational performance against PostgreSQL---results of these experiments are summarized in Table \ref{table:full}. 

\vspace{1mm}
{\em Superior Statistical Accuracy.} 
In Section \ref{sec:experiments_stat}, we verify tspDB's statistical accuracy by testing its performance on benchmark datasets of time series data from real-world applications (e.g. used in \cite{TRMF}) and also on synthetic datasets we generate. 
We compare tspDB's accuracy against state-of-the-art algorithms, including LSTMs \citep{LSTM}, DeepAR (by Amazon \cite{DeepAR}), Temporal Regularized 
Matrix Factorization (TRMF) \citep{TRMF} and Prophet  (by Facebook  \cite{Prophet}).
We measure the imputation and forecasting prediction accuracy, for both mean and variance, of these various algorithms as we vary the level of 
noise, increase the fraction of missing data, and change the noise model. 
We use the `Weighted Borda Count' (WBC) as our statistical accuracy metric.
This score is based on pairwise comparisons of the normalized root mean squared error (NRMSE) across experiments.
We use it as it better captures the relative performance across all experiments and methods, unlike a simple statistic such as the mean or the median 
\footnote{In fact, as evidenced in Table \ref{table:stat}, tspDB has even better statistical performance if we use the mean NRMSE score, as is standard practice.}.
Refer to Appendix \ref{sec:score_metric} for the definition of WBC and NRMSE. 

\hspace{4mm} On standard time series benchmarks, using WBC as our metric, we find tspDB outperforms all other methods in both mean imputation and forecasting.
We highlight tspDB's mean WBC score is 1.1-1.3x higher compared to LSTM and DeepAR, some of the most commonly used modern deep learning based time series algorithms. 
For variance estimation, which we use to produce prediction intervals as described in Section \ref{sec:system_query}, we again find tspDB outperforms all other methods.
Importantly, we note that existing methods do not provide variance estimation and instead we have to adapt these other algorithms to be able to make such comparisons---we adapt TRMF to do variance imputation and use DeepAR's in-built uncertainty quantification functionality for variance forecasting.
Despite providing such an `unfair' advantage to these existing approaches, tspDB still manages to outperform them.
In short, tspDB achieves state-of-the-art statistical performance.
%
%
%

\vspace{1mm}
{\em Real-time model training, low-latency predictions.}
As stated earlier, the two metrics we use to measure computational performance of tspDB are ML model training time and prediction query latency. 
In Section \ref{sec:computational_experiments}, we test how tspDB compares against the popular, open-source time series prediction libraries stated above---LSTM, DeepAR, TRMF, and Prophet.
With respect to both ML model training time and prediction query latency, tspDB outperforms all other prediction algorithms we compare against---refer to Section \ref{sec:computational_experiments} for a precise description of how ML model training time and prediction latency are measured.
We highlight that compared to LSTM and DeepAR, tspDB's median ML model training time and prediction latency is $59$-$62$x and $94$-$95$x quicker, respectively. 
See Table \ref{table:full} for a summary of the computational performance of tspDB compared to the other methods.
In conclusion, tspDB achieves both state-of-the-art statistical and computational performance.
\vspace{-2mm}

\vspace{-2mm}
\section{Prediction Interface}\label{sec:interface}
\vspace{-2mm}
An important goal of this work is to design an `easy-to-use' interface to make predictions, which: 
(i) alleviates the need for error-prone data engineering; 
(ii) directly gives access to predictive functionality in real-time for time series data. 
In particular, we wish to provide access to imputation and forecasting functionality for multivariate time series data.  
Towards this goal, there has been considerable recent work, especially in industry, exploring the feasibility of DBs to automate 
and natively support ML workloads (\cite{cxoracle, pymssql, ibmdb, RODM, RevoScaleR, idmdbR}). 
The focus of these works has been to expose an interface that allows users to select from an array of 
ML algorithms (e.g. generalized linear models, random forests, neural networks) and train them (plus tune hyper-parameters) in the DB itself. 

\vspace{2mm}
\textbf{Prediction Query.} In contrast, a significant point of departure of tspDB is how an end user interfaces with the system to produce 
predictions. In particular, to make ML more broadly accessible, we take a different approach and abstract  the ML 
model from the user, and instead \textit{strive for a single interface to answer both standard DB queries and 
predictive queries}. In particular, in tspDB, a predictive query has the same form as a standard SELECT query. 
The key difference is that in a predictive query, the response is a \textit{prediction}, rather than a retrieval 
of available data.  
For example, consider a relation of the stock price of three companies over 100 days: \texttt{stock(day:timestamp, company1:float, company2:float, company3:float)}. 
An example of a predictive query in this setting is shown in Figure \ref{fig:predict_query}. 
Note, day 101 is a forecast, i.e., a prediction, since the DB only contains data for the first 100 days.
Similarly, if the query is changed to \texttt{WHERE day = 10}, then we get the imputed (or de-noised) value for the stock price on the tenth day.
In this case, the PREDICT query response differs from a standard SELECT query as rather than simply retrieving the available data for that day, which may possibly even be missing, a predicted de-noised value is returned. 

\vspace{2mm}
\textbf{Building a Prediction Model in DB.}
To enable PREDICT queries, we need to build a prediction model using the available multivariate time series data. 
Continuing the example from above, a user can build a prediction model in tspDB as shown in Figure \ref{fig:create_model} using the CREATE query in tspDB.
Note, importantly, the prediction model can be trained over multiple time series columns, i.e., is built by simultaneously using data from multiple time series, in this case the stock prices for the three companies.
%
Thus, the CREATE and PREDICT queries in tspDB are very much like building a DB index and making a SELECT query in SQL. 

\begin{figure}[htbp]
\vspace{-3mm}
\floatconts
{fig:interface}
{\caption{Proposed interface for a PREDICT query (Figure \ref{fig:predict_query}), and CREATE prediction model in tspDB (Figure \ref{fig:create_model}).}}
{
	\subfigure[][c]{\label{fig:predict_query}
	\includeteximage[]{figures/sql_query}}
	\qquad 
	\subfigure[][c]{\label{fig:create_model}
	\includeteximage[]{figures/sql_query_2}}
	\vspace{-6mm} 
}
\vspace{-3mm}
\end{figure}
%
%
\textbf{tspDB vis-a-vis PostgreSQL DB.}
As a further computational test, in Section \ref{sec:postgres_computational_experiments}, we compare the computational performance of tspDB versus the standard DB index of PostgreSQL.
On standard multivariate time series datasets, we find that the time required to CREATE prediction model in tspDB ranges from 0.58x-1.52x to that of PostgreSQL's bulk insert time.
With respect to query latency, we find that the time taken to answer a PREDICT query in tspDB is $1.6$ to $2.8$ times higher compared to answering a standard SELECT query, ($1.6$-$2.7$x for imputation queries, $1.7$-$2.8$x for forecasting queries).
In absolute terms, this translates to about $1.32$ milliseconds to answer a SELECT query and ~$3.36$/$3.45$ milliseconds to answer an imputation/forecasting PREDICT query (see Appendix \ref{sec:system_config} for the machine configuration used).
See Table \ref{table:full} for summary of results.
Hence, tspDB's computational performance is close to the time it takes to just insert and read data from PostgreSQL, making it a real-time prediction system. 
%

%

%


\vspace{2mm}
\textbf{tspDB: Open-Source Implementation.} 
As stated earlier, as our main contribution we explicitly instantiate a proof-of-concept, tspDB, which directly integrates with PostgreSQL, and supports the PREDICT query (and CREATE prediction_model) functionality.  
tspDB has  an open-source implementation and is an extension of PostgreSQL; it allows users to create prediction queries over both single columns or multiple columns on a time series relation, i.e., allows for both univariate and multivariate time series prediction, and provides prediction intervals for the estimates. 
\textit{Importantly, we note that all results presented (e.g., in Table \ref{table:full}) can be reproduced using our open-source implementation with default settings}
\vspace{-3mm}

\vspace{-1mm}
\section{Incremental Matrix Factorization Based Time Series Algorithm}\label{sec:algorithm}
\vspace{-2mm}

In this section, we describe our novel incremental algorithm for multivariate time series prediction. 
In Section \ref{sec:overview},  we describe the batch version of the algorithm used to do mean and variance estimation;
In Section \ref{sec:inc_algorithm} we describe the incremental variant of the algorithm that is tspDB is actually built off.
In Section \ref{sec:justification}, we provide intuitive formal justification for this algorithm. 
In Section \ref{sec:tspDB_implementation_details}, we detail the system implementation of tspDB's direct integration with PostgreSQL.

\vspace{-2mm}
\subsection{Algorithm Description}\label{sec:overview}

\vspace{1mm}
\textbf{Multivariate Time Series Notation.}
We begin with some brief notation necessary to describe the algorithm.
Without loss of generality, we consider a discrete time index $t \in \Zb$
\footnote{This is not restrictive as tspDB supports prediction tasks on time series observed at different frequencies. It does so by allowing the user to pick an aggregation interval (e.g. second, minute, hour) and an aggregation function (e.g. mean, min, max).}. 
Say we aim to train a prediction model using $N$ time series, and for each time series $n \in [N] := \{1, \dots, N\}$, we have access to $T \in \Nb$ observations (the observations might be missing).
Let $X_n(t)$ denote the (potentially noisy, missing) observation of the $n$-th time series, $X_n$, at the $t$-th time step.
For any $t, s \in [T]$ such that $t<s$, let $X_n(t:s) := [X_n(t), \dots, X_n(s)]$.  
Let $X$ denote the collection of $N$ time series $X_1, \dots, X_N$.

\vspace{1mm}
\textbf{Page Matrix Data Representation.} 
The key data representation that is utilized for the proposed algorithm is the ``stacked'' Page matrix.
Let integers $L, P \geq 1$ be such that $P = \lfloor T/L \rfloor$. Let $\bar{P}:=N\times P$.
Then the stacked Page matrix, $\bZ^{X} \in \Rb^{L \times \bar{P}}$ induced by $X$ is defined as  
\[
Z^{X}_{i,[j+P\times(n-1)N]} = X_n(i+(j-1)L),~~\mbox{for}~~i\in[L], j\in[P], n\in[N].
\]
Further, let $\bZ^{X^2} \in \Rb^{L \times \bar{P}}$ denote the stacked Page matrix of squared observations,
\[
Z^{X^2}_{i,[j+P\times(n-1)N]} = X^2_n(i+(j-1)L).
\] 

\textbf{Mean Estimation Algorithm.} 
Figure \ref{fig:algorithm} provides a visual depiction of the core steps of the algorithm. \\
\iftoggle{FIGURES}{
\begin{figure}[!htb]
	\centering
	\includegraphics[width=0.8\linewidth]{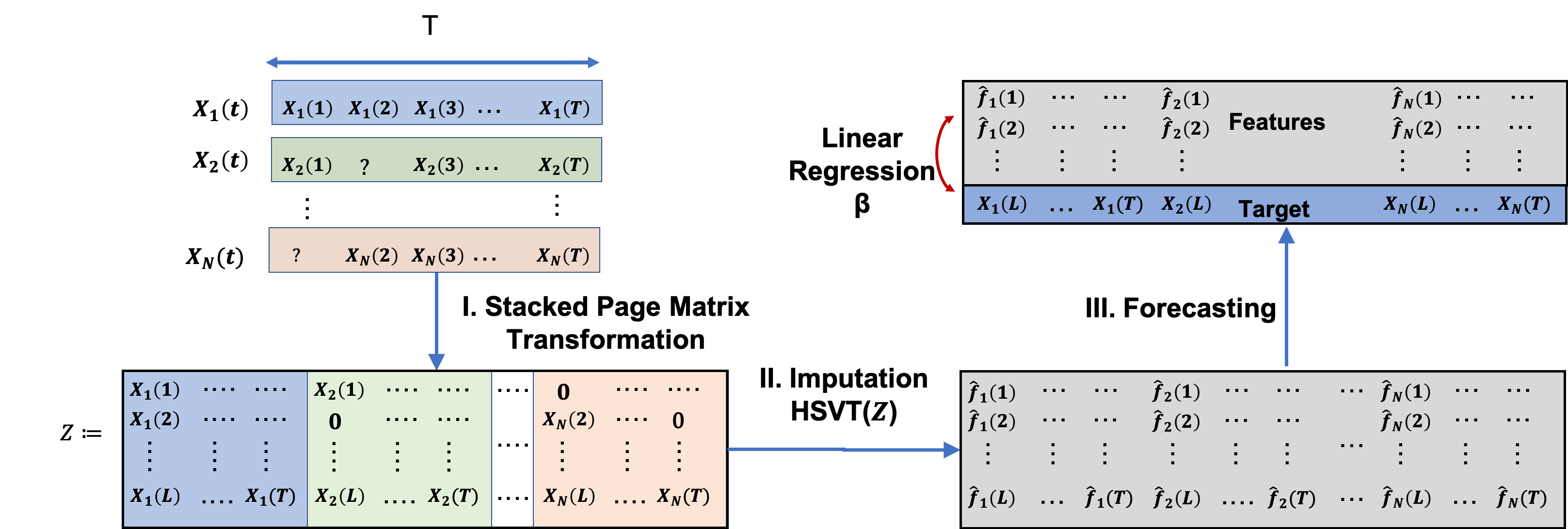}
	\caption{Pictorial depiction of the key steps of the mean estimation algorithm.}
	\label{fig:algorithm}
	\vspace{-2mm}
\end{figure}
}

\noindent \textbf{\em Imputation.} The key steps of mean imputation are as follows.\\
{
	%
	{\bf 1.}~(Form Page Matrix) 
	Transform $X_1(1:T), \dots, X_N(1:T) $ into a stacked Page matrix $\bZ^{X} \in \Rb^{L \times \bar{P}}$ with $L \leq \bar{P}$. 
	Fill all missing entries in the matrix by $0$. \\
	%
	{\bf 2.}~(Singular Value Thresholding) 
	Let SVD of  $\bZ^{X} = \bU \bS \bV^T$, where $\bU \in \Reals^{L \times L}, \bV \in \Reals^{\bar{P} \times L}$ represent left and right singular vectors and $\bS = \diag(s_1,\dots, s_L)$ the diagonal matrix of singular values $s_1\geq \dots \geq s_L \geq 0$.
	 Obtain $\bhM^{f} = \bU \bS_k {\bV}^T$ by setting all but top $k$ singular values to $0$, i.e.  $\bS_k = \diag(s_1,\dots, s_k, 0, \dots, 0)$ for some $k \in [L]$. \\
	 %
	{\bf 3.}~(Output)
	$\hat{f}_n(i+(j-1)L) \coloneqq \bhM^{f}_{i,[j+P(n-1)]} $, $i \in [L], j \in [{P}]$. \\
	%
}

%

\textbf{\em Forecasting.}
Forecasting includes an additional step of fitting a linear model on the de-noised matrix. \\
{
	%
	{\bf 1.}~(Form Sub-Matrices) 
	Let $\btZ^{X} \in \Reals^{L-1 \times \bar{P}}$ be a sub-matrix of $\bZ^{X}$ obtained by removing its last row. 
	Let $\bZ^X_L$ denote the last row. \\
	%
	{\bf 2.}~(Singular Value Thresholding)  
	Let SVD of  $\btZ^{X} = \btU \btS \btV^T$, where $\btU \in \Reals^{L-1 \times L-1}, \btV \in \Reals^{\bar{P} \times L-1}$ represent left and right singular vectors and $\btS = \diag(s_1,\dots, s_{L-1})$ the diagonal matrix of singular values $\ts_1\geq \dots \geq \ts_{L-1} \geq 0$. 
	Obtain $\bhtM^{f} = \btU \btS_k {\btV}^T$ by setting all but top $k$ singular values to $0$, i.e.  $\btS_k = \diag(\ts_1,\dots, \ts_k, 0, \dots, 0)$ for some for some $k \in [L-1]$. \\
	%
	{\bf 3.}~(Linear Regression) 
	$\hat{\beta} = \arg \min_{b \in \mathbb{R}^{L-1}} \norm{\bZ^{X}_L - (\bhtM^{f})^T b}^2_2$.\\
	%
	{\bf 4.}~(Output) 
	$\hat{f}_n(T+1) \coloneqq X_n(T-(L-1):T)^T \hat{\beta}$.
}

%

\vspace{2mm}
\textbf{Variance Estimation Algorithm.}
For variance estimation, the mean estimation algorithm is run twice, once on $\bZ^{X}$ and once on $\bZ^{X^2}$.
Then to estimate the time-varying variance (for both forecasting and imputation), a simple post-processing step is done where the square of the estimate produced from running the algorithm on $\bZ^{X}$ is subtracted from the estimate produced from $\bZ^{X^2}$. Precise details follow. 

\vspace{1mm}

\textbf{\em Imputation}. Below are steps of variance imputation. \\
{
	%
	{\bf 1.}~(Impute $\bZ^{X}, \bZ^{X^2}$) 
	Use the mean imputation algorithm on $X_1(1:T), \dots, X_N(1:T)$ (i.e., $\bZ^{X}$) and $X^2_1(1:T), \dots, X^2_N(1:T)$ (i.e., $\bZ^{X^2}$), to produce the de-noised Page matrices $\bhM^{f}$ and  $\bhM^{f^2 + \sigma^2}$, respectively. \\
	%
	{\bf 2.}~(Output) 
	Construct $\bhM^{f^2} \in \Rb^{L \times \bar{P}}$, where $\bhM^{f^2}_{ij} \coloneqq (\bhM^{f}_{ij})^2$, and produce estimates,  $\hat{\sigma}_n^2(i+(j-1)L) \coloneqq  \bhM^{f^2 + \sigma^2}_{i,[j+P\times(n-1)]}  -  \bhM^{f^2}_{i,[j+P\times(n-1)]}$, for $i \in [L], j \in [{P}]$.
	%
%
}

\vspace{1mm}

\textbf{\em Forecasting}. Just like in mean forecasting, there is an additional step after imputation.\\
{
	{\bf 1.}~(Forecast with $\btZ^{X}, \bZ^X_L, \btZ^{X^2}, \bZ^{X^2}_L $) Using the mean forecasting algorithm on $X_1(1:T), \dots, X_N(1:T) $ and $X^2_1(1:T), \dots, X^2_N(1:T) $, to produce forecast estimates $\hat{f}_n(T+1)$ and $\widehat{f_n^2 + \sigma_n^2}(T+1)$, respectively. \\
	{\bf 2.}~(Output) Produce the variance estimate 
	$\hat{\sigma}_n^2(T+1) \coloneqq \widehat{f_n^2 + \sigma_n^2}(T+1) -  (\hat{f}_n(T+1))^2.$
}
%


%

\vspace{-2mm}
\subsection{Incremental Variant of Mean and Variance Estimation Algorithms}\label{sec:inc_algorithm}
\vspace{-1mm}
\begin{figure}[!htb]
\vspace{-2mm}
	\centering
	\includegraphics[width=0.8\linewidth]{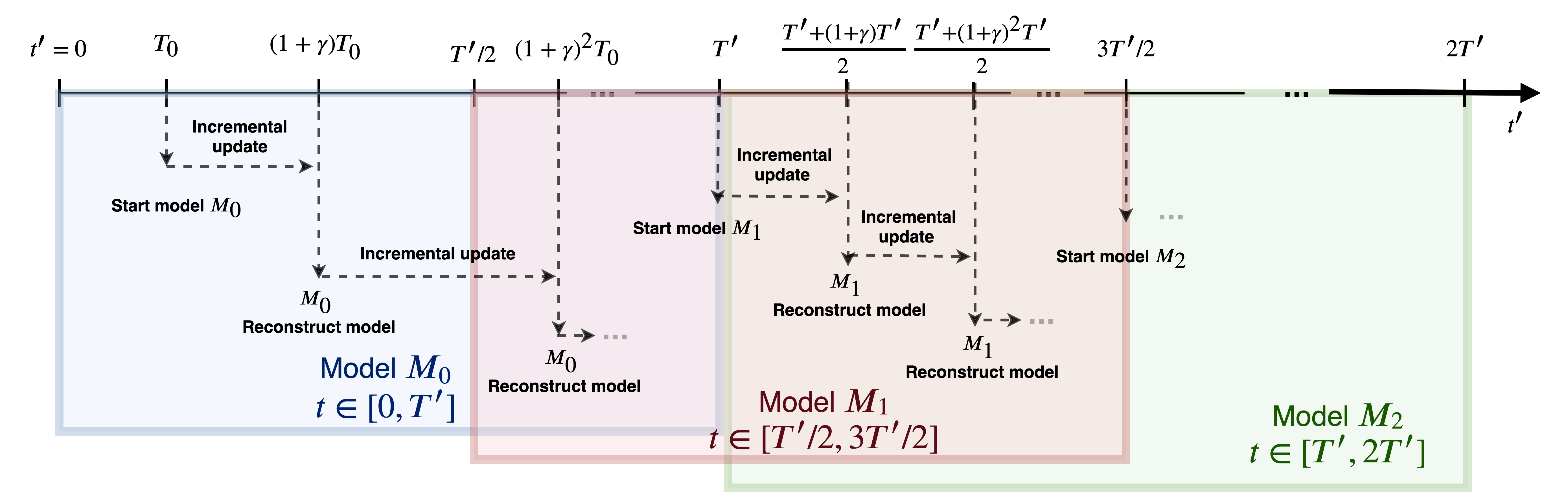}
	\caption{Incremental variant of proposed algorithm.}
	\label{fig:framework}
\vspace{-2mm}
\end{figure}
The algorithms for mean and variance estimation described above, as written, are meant for batch updating (i.e., they get to observe all data at once). 
However, for computational efficiency in any real-world application, the prediction model needs to be trained and updated incrementally, which in turn requires making these algorithms incremental. 
In particular, both the computational efficiency and statistical accuracy should not degrade with volume of data inserted.
The key computationally expensive step in the proposed algorithm is computing the SVD of the relevant Page matrices.
We address this by proposing  a simple incremental ``meta''-algorithm, which has three hyper-parameters:  $T_0, T' \in \mathbb{Z}$ and $\gamma \in (0,1] $. 
Let  $t' \coloneqq N \times t $ denote the number of observations seen thus far.
Depending on $t'$, there are three scenarios:
%


%
{
\noindent \textbf{Case 1. ($t' < T_0$)}:
%
Given that the minimum observation count has not been met, the model will output the average of the observations $X_n(0:t)$ (for both imputation and forecasting).


\noindent \textbf{Case 2. ($T_0\leq t' \leq T'$)}: 
%
\begin{inparaenum}[]
	\item {\em If} $t' =\left\lfloor T_0 (1+\gamma)^\ell\right\rfloor$ for $0\leq \ell \leq q_0$ where $q_0=\left\lfloor\dfrac{\ln(T'/T_0)}{\ln(1+\gamma)}\right\rfloor$: Re-train $M_0$ using $X_1(0:t), \dots, X_N(0:t)$ (i.e, do full batch SVD).
	\item {\em Else}: incrementally update the model $M_0$ (i.e., do incremental SVD using the method in \cite{zha1999updating}). 
\end{inparaenum}
%


\noindent \textbf{Case 3. ($ t' > T'$)}: 
%
%
\begin{inparaenum}[]
	\item Identify $M_i$ where $ i = i(t) = \max(0, \left \lfloor\frac{2t'}{T'}\right\rfloor - 1)$, and denote the first time index of $M_i$ as $s_i = \frac{iT'}{2} $. 
	\item {\em If}  $(t'-s_i) = \left\lfloor T_0 (1+\gamma)^\ell\right\rfloor$ for $0\leq \ell \leq q$ where $q=\left\lfloor\dfrac{\ln(2)}{\ln(1+\gamma)}\right\rfloor$: Re-train $M_i$ with observations $X_1(s_i:t), \dots X_N(s_i:t)$.
	\item {\em Else}: incrementally update $M_i$. 
\end{inparaenum}
}
%


Figure \ref{fig:framework} illustrates how the proposed segmentation is carried out, and at what points the model is trained fully and where it is incrementally updated.
Note, the incremental SVD method we use was developed in the Latent Semantic Indexing literature (see \cite{zha1999updating} for details).

\vspace{1mm}

{\em tspDB Algorithm Hyper-parameters}. 
The hyper-parameters of tspDB are $L, k, k_1, k_2, T_0, T', \gamma$. For all reported results in Section \ref{sec:experiments}, and in our open-source implementation, these are set in an automated manner (see  Appendix \ref{sec:hyperparameters_selection}).
In Section \ref{sec:appendix_comp_experiments}, we justify this choice of hyper-parameters.


\subsection{Algorithm Justification} \label{sec:justification}
%
%

\vspace{2mm}
\textbf{Notation.}
Let $f_n(t)$ denote the noiseless version of the time series, $X_n(t)$, i.e., $\Ex[X_n(t)] = f_n(t)$. 
Further, let $\sigma^2_n(t) = \Ex[ X_n(t)- f_n(t) ]^2$, denote the variance of the per-step noise.
Let the collection of $N$ time series, $f_1, \dots, f_N$ be denoted by $f$;
define $\sigma^2$ analogously with respect to $\sigma^2_1, \dots, \sigma^2_N$. 
Let $\bM^{f}, \bM^{\sigma^2} \in \Rb^{L \times \bar{P}}$ denote the stacked Page matrices induced by $f$ and $\sigma^2$ (analogous to how $\bZ^{X}, \bZ^{X^2}$ are defined). 
In particular, $M^{f}_{i,[j+P\times(n-1)N]} := f_n(i+(j-1)L)$ and $M^{\sigma^2}_{i,[j+P\times(n-1)N]} := \sigma^2_n(i+(j-1)L)$.

\subsubsection{Justification for Mean Estimation Algorithm}
The goal of the mean estimation algorithm is to impute and forecast the underlying time-varying mean, $f$, i.e., produce an estimate $\widehat{f}$.

\vspace{2mm}

\textbf{Linear Recurrent Formulae (LRF).}
Consider the univariate time series setting, i.e., $N = 1$.
Here, $M^{f_1}_{ij} = f_1(i+(j-1)L)$.
\begin{definition}
A time series, $f_n$, is called a order-$R$ (univariate) LRF if $i, j \in \Nb$ with $i + j \le T$,
$
	f_1(i + j) = \sum^R_{r=1} g_r(i) h_r(j),
$
where $g_r(\cdot)$, $h_r(\cdot): \Zb \to \Rb$ for $r \in [R]$.
\end{definition} 
\begin{prop} \emph{{(Proposition 5.2 in  \cite{sigmetrics})}}\label{prop:LRF_rich}
\label{prop:lowrank_LRF_example}
$
f_1(t)= \sum_{a=1}^A \exp(\alpha_a t) \cdot \cos(2\pi \omega_a t + \phi_a) \cdot P_{m_a}(t)
$
admits an order-$R$ LRF representation, where $P_{m_a}$ is any polynomial of degree $m_a$. 
Further the order $R$ of the LRF induced by $f_1$ is independent of $T$, the number of observations, and is bounded by
$
R \le A(m_{\max} + 1)(m_{\max} + 2),
$
where $m_{\max} = \max_{a \in A} m_a$.
\end{prop}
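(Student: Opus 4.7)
The plan is to decompose each summand of $f_1$ into a product of simpler factors whose LRF expansions are known, then use closure of the LRF property under products and sums to combine them, tracking ranks along the way.

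First, I would verify three elementary identities that witness the LRF property for the building blocks after the substitution $t = i+j$. The exponential identity $\exp(\alpha_a(i+j)) = \exp(\alpha_a i)\exp(\alpha_a j)$ shows that a pure exponential is an order-$1$ LRF. The cosine addition formula $\cos(2\pi\omega_a(i+j)+\phi_a) = \cos(2\pi\omega_a i + \phi_a)\cos(2\pi\omega_a j) - \sin(2\pi\omega_a i + \phi_a)\sin(2\pi\omega_a j)$ writes the sinusoidal factor as an order-$2$ LRF. Finally, for any polynomial $P_{m_a}$ of degree $m_a$, the binomial theorem applied to each $(i+j)^k$ expands $P_{m_a}(i+j)$ into a sum of monomials of the form $i^l j^p$ with $l+p\le m_a$; since there are $\binom{m_a+2}{2} = (m_a+1)(m_a+2)/2$ such monomials, this expansion exhibits $P_{m_a}(i+j)$ as an LRF of order at most $(m_a+1)(m_a+2)/2$.

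Next I would establish two arithmetic closure properties of LRFs. If $u(i+j) = \sum_{r=1}^{R_u} g_r(i)h_r(j)$ and $v(i+j) = \sum_{s=1}^{R_v} g'_s(i)h'_s(j)$, then the product satisfies $(uv)(i+j) = \sum_{r,s}[g_r(i)g'_s(i)][h_r(j)h'_s(j)]$, giving an LRF of order at most $R_u R_v$, while $u+v$ is an LRF of order at most $R_u+R_v$ by simple concatenation of the rank-one terms. Applying the product rule to the three elementary factors composing the $a$-th summand yields an LRF of order at most $1\cdot 2\cdot (m_a+1)(m_a+2)/2 = (m_a+1)(m_a+2)$. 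Applying the sum rule across $a\in[A]$ then gives $R \le \sum_{a=1}^A (m_a+1)(m_a+2) \le A(m_{\max}+1)(m_{\max}+2)$, which is the stated bound. Independence from $T$ is immediate, since every identity above holds pointwise in $(i,j)$, and neither the component functions $g_r, h_r$ nor their count $R$ involve $T$.

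There is no deep obstacle here: the result reduces to three trigonometric/algebraic identities and two essentially trivial closure arguments. The one point that warrants care is choosing the right rank estimate for the polynomial factor. A tight Hankel-rank analysis of $[(i+j)^k]_{i,j}$ would actually give $m_a+1$ and yield a sharper overall bound $2A(m_{\max}+1)$, but to reproduce the inequality \emph{exactly} as stated one must use the crude monomial count $(m_a+1)(m_a+2)/2$, which multiplies cleanly with the cosine rank to give $(m_a+1)(m_a+2)$ per summand. Verifying that the product expansion of the three factors indeed realizes this count, without hidden cancellations that would change the bookkeeping, is the only step worth pausing on.
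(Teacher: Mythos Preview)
The paper does not actually prove this proposition; it is quoted verbatim as Proposition~5.2 of \cite{sigmetrics} and only an interpretation paragraph follows. So there is no in-paper argument to compare against.

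That said, your proof is correct and is the standard route one would expect the cited reference to take: verify the elementary separable expansions for $\exp$, $\cos$, and polynomials, then use that LRF order is sub-multiplicative under pointwise products and sub-additive under sums. Your bookkeeping is right, including the polynomial count $\binom{m_a+2}{2}$ which combines with the factor of $2$ from the cosine to give exactly $(m_a+1)(m_a+2)$ per summand and hence the stated bound. Your remark that the Hankel rank of a degree-$m$ polynomial is actually $m+1$, yielding the sharper bound $2A(m_{\max}+1)$, is also correct and worth noting; the looser bound in the proposition simply reflects the cruder monomial count you identified.
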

{\em Interpretation.}
From Proposition \ref{prop:LRF_rich}, we see that LRFs admit a rich class of time series families including finite sums of products of harmonics, polynomials and exponentials.
Thus LRFs well-approximate a large class of time series dynamics (e.g. stationary time series functions, see \cite{timeseries1}). 

\vspace{2mm}

\textbf{Imputation, Forecasting for LRFs.}
It is easy to see that if $f_1$ is an order-$R$ LRF, the Page matrix, $\bM^{f_1}$, induced by it is rank $R$.
In particular,
$M^{f_1}_{ij} = \sum^R_{r=1} U_{ir} V_{jr}$,
where $U_{ir} = g_r(i)$ and $V_{ir} = h_r(i)$.
Since $\bM^{f_1}$ is rank $R$ and $\Ex[\bZ^{X_1}] = \bM^{f_1}$, by recent advances in the matrix estimation and high-dimensional linear regression literature, we know that performing HSVT on $\bZ^{X_1}$ and subsequently fitting a linear model (i.e., doing Principal Component Regression) leads to a consistent estimator for both imputation and forecasting, with error scaling as $\sim \sqrt{R} / T^{1/4}$, $\sim R / \sqrt{T}$ respectively.
Specifically, see Theorem {4.1} in \cite{sigmetrics} for the imputation result; and Theorem {4.2} and Proposition {4.2} in \cite{PCR} for the forecasting result.
Note the key to these results is that the underlying matrix $\bM^{f_1}$ is (approximately) low-rank. 

\vspace{2mm}

\textbf{Multivariate LRF.}
In multivariate time series data, where $N > 1$, there exists both temporal structure, i.e., the relationship within a time series) and ``spatial'' structure, i.e., the relationship across a collection of related time series.
We propose a natural multivariate generalization of the LRF model.
Under this model, we justify below why our proposed algorithm provides accurate imputation and forecasting. 
\begin{definition}
$f$ is an order-$(K, R_{\max})$ multivariate LRF if for all $n \in [N]$,
$
f_n(t) = \sum_{k=1}^K (\theta_n)_k \cdot h_k(t),
$
where $\theta_n \in \Rb^K$ and $h_k: \Zb \to \Rb$ is an order-$R_k$ LRF,
and $R_k \le R_{\max}$ for $k \in [K]$.
\end{definition}
{\em Interpretation.}
The $(K, R_{\max})$ multivariate LRF model can be interpreted as one where there are $K$ ``fundamental'' time series $h_k(\cdot)$;
each $h_k(\cdot)$ is a (univariate) LRF, with order, $R_k$, bounded by $R_{\max}$.
Each $f_n(\cdot)$ for $n \in [N]$ can thus be seen as a unique additive combination of these $K$ LRFs,
with the latent linear parameters given by $(\theta_n)_k$.
\begin{prop}\label{prop:multi_LRF}
If $f$ is an order-$(K, R_{\max})$ multivariate LRF, then $\text{rank}(\bM^{f}) \le K \cdot R_\max$.
\end{prop}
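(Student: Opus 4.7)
The plan is to produce an explicit low-rank factorization of $\bM^f$ by leveraging the univariate LRF structure of each ``fundamental'' time series $h_k$ and then using the multivariate LRF decomposition of each $f_n$ in terms of the $h_k$'s.

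First, I would note that, by the observation already recorded in the text right before the statement of the proposition, if $h_k$ is an order-$R_k$ univariate LRF, then its Page matrix $\bH_k \in \Rb^{L \times P}$, defined by $(H_k)_{ij} = h_k(i + (j-1)L)$, admits the factorization $(H_k)_{ij} = \sum_{r=1}^{R_k} U^{(k)}_{ir} V^{(k)}_{jr}$ with $U^{(k)}_{ir} = g^{(k)}_r(i)$ and $V^{(k)}_{jr} = h^{(k)}_r(j)$. In particular, $\mathrm{rank}(\bH_k) \le R_k \le R_{\max}$.

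Next I would translate the multivariate LRF definition, $f_n(t) = \sum_{k=1}^K (\theta_n)_k\, h_k(t)$, into the Page-matrix language. By linearity of the Page construction, the per-series Page matrix $\bM^{f_n} \in \Rb^{L\times P}$ satisfies $\bM^{f_n} = \sum_{k=1}^K (\theta_n)_k\, \bH_k$. The stacked Page matrix $\bM^f$ is simply the horizontal concatenation $[\bM^{f_1} \mid \bM^{f_2} \mid \cdots \mid \bM^{f_N}]$, so every column of $\bM^f$ lies in $\sum_{k=1}^K \mathrm{colspace}(\bH_k)$. Hence
\[
\mathrm{rank}(\bM^f) \;\le\; \dim\!\Big(\sum_{k=1}^K \mathrm{colspace}(\bH_k)\Big) \;\le\; \sum_{k=1}^K \mathrm{rank}(\bH_k) \;\le\; \sum_{k=1}^K R_k \;\le\; K \cdot R_{\max}.
\]

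There is no real obstacle here: the bound is essentially a bookkeeping exercise once one observes that the Page construction is linear in the time series and that the $h_k$'s supply a common low-dimensional ``basis'' through which every $f_n$ factors. The only minor care point is the indexing $[\,j + P(n-1)\,]$ in the stacked Page matrix, but this only affects column ordering and does not change the column space, so the column-space argument goes through verbatim. Equivalently, one could package this as a single factorization $\bM^f = \bU \bTheta$ where $\bU = [\bU^{(1)} \mid \cdots \mid \bU^{(K)}] \in \Rb^{L \times \sum_k R_k}$ collects the left factors and $\bTheta$ encodes the $(\theta_n)_k$ together with the right factors $\bV^{(k)}$; this form makes the rank bound $\sum_k R_k \le K R_{\max}$ manifest.
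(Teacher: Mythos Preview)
Your proposal is correct and follows essentially the same approach as the paper: both exploit the univariate LRF factorization of each $h_k$ to write every entry of $\bM^f$ as a sum of at most $\sum_k R_k \le K R_{\max}$ terms whose left factor depends only on the row index, yielding the rank bound. Your column-space phrasing is a slightly cleaner packaging of the same idea, but the content is identical.
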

\begin{proof}
$\bM^{f}_{i,[j+P\times(n-1)N]} 
= f_n(i+(j-1)L) = \sum_{k=1}^K (\theta_n)_k \cdot h_k(i+(j-1)L) 
= \sum_{k=1}^K (\theta_n)_k \cdot \sum^{R_k}_{r=1} U^{(k)}_{ir} V^{(k)}_{jr}$.
The last equality follows from the fact that $h_k(\cdot)$ is a order-$R_k$ LRF and so the induced Page Matrix $\bM^{h_k}$ is of rank $R_k$. Note, $U^{(k)}, V^{(k)} \in \Rb^K$ are the left and right singular vectors of $\bM^{h_k}$ respectively.
Observing that the number of terms in the summation of $ \sum_{k=1}^K (\theta_n)_k \cdot \sum^{R_k}_{r=1} U^{(k)}_{ir} V^{(k)}_{jr}$ is less than $K \cdot R_\max$, completes the proof. 
\end{proof}
{\em Interpretation.}
The key data transformation, the stacked Page matrix,  $\bM^{f}$, has rank less than $K \cdot R_\max$ and $\Ex[\bZ^{X}] = \bM^{f}$ . 
Crucially, under this model, the rank of $\bM^{f}$ is not growing with $N$ or $T$, rather just with the inherent model complexity of the multivariate time series, quantified by $K \cdot R_\max$.
{
The results of Theorem {4.1} in \cite{sigmetrics} and Theorem {4.2} and Proposition {4.2} in \cite{PCR}, suggests that the imputation and forecasting error for the algorithm we propose scales as 
{$\sim \sqrt{K\cdot R_\max}/ (NT)^{1/4}$} and {$\sim   K\cdot R_\max / \sqrt{NT}$ respectively}, 
thereby justifying the mean estimation algorithm under this model.
The additional inverse dependence on $N$ in the error scaling may be why the multivariate generalization of the time series algorithm has significantly better empirical prediction error compared to the univariate version of the algorithm. 
As stated earlier, a rigorous theoretical finite-sample analysis of our proposed algorithm is beyond the scope of this work, given our focus on building and comprehensively benchmarking tspDB.
}
\subsubsection{Justification for Variance Estimation Algorithm}
The goal of the variance estimation algorithm is to impute and forecast the underlying time-varying variance, $\sigma^2$, i.e., produce an estimate $\hat{\sigma}^2$.
A key challenge in estimating the time-varying variance of a time series is that it is not directly observed (as opposed to $f$, for which we at least get noisy, sparse observations, $X$).

\vspace{2mm}
\textbf{Multivariate LRF model for $\sigma^2$.} 
Analogous to the model we propose for $f$, we justify the variance estimation algorithm for estimating $\sigma^2$ under a multivariate LRF model.
In particular, assume $\sigma^2$ is an order-$(K^{(2)}, R^{(2)}_{\max})$ multivariate LRF; continue to assume $f$ is an order-$(K, R_{\max})$ multivariate LRF.

\vspace{2mm}
\textbf{Repeating Mean Estimation Algorithm on $\bZ^{X^2}$.}
Let $\bM^{f^2} \in \Rb^{L \times \bar{P}}$ by the stacked Page matrix induced by an entry-wise squaring of $\bM^{f}$, i.e., $M^{f^2}_{ij} := (M^{f}_{ij})^2$.
Let $\bM^{f^2 + \sigma^2} \in \Rb^{L \times \bar{P}}$ be the stacked Page matrix induced by entry-wise addition of the matrices $\bM^{f^2}$ and $\bM^{\sigma^2}$.
Then by a straightforward modification of the proof of Proposition \ref{prop:multi_LRF}, one can establish that $\text{rank}(\bM^{f^2 + \sigma^2}) \le (K \cdot R_\max)^2 + K^{(2)} \cdot R^{(2)}_{\max}$.
Note that $\Ex[X^2] = f^2 + \sigma^2$ and so $\Ex[\bZ^{X^2}] = \bM^{f^2 + \sigma^2}$.
Hence using same argument as we did in the previous section provides justification of why applying the mean estimation algorithm on $\Ex[\bZ^{X^2}]$ allows one to accurately estimate $\widehat{f^2 + \sigma^2}$ for both imputation and forecasting.
{
In particular, similar to the previous section, this suggests that the variance imputation and forecasting error will scale as {$\sim  (K \cdot R_\max) \cdot \sqrt{ K^{(2)} \cdot R^{(2)}_{\max}}/ (NT)^{1/4}$ and $\sim (K \cdot R_\max)^2 \cdot K^{(2)} \cdot R^{(2)}_{\max} / \sqrt{NT}$ respectively}.}

\vspace{2mm}

\textbf{Estimating $\sigma^2$ Under Multivariate LRF Model.}
It is then easy to see that through the following simple post-processing step, we can reliably estimate $\sigma^2$;
\[
\hat{\sigma}^2 := \widehat{f^2 + \sigma^2} - \hat{f}^2,
\] 
where $\hat{f}^2$ is the component-wise square of the estimate, $\hat{f}$, produced by the mean estimation algorithm.
If $\widehat{f^2 + \sigma^2}$ is close to $f^2 + \sigma^2$ and $\hat{f}$ is close to $f$, by triangle inequality it must be that $\hat{\sigma}^2$ is close to $\sigma^2$.
This motivates why this simple variance estimation algorithm reliably recovers the underlying (unobserved) time-varying variance under a multivariate LRF model for both $f$ and $\sigma^2$.
Empirically in Section \ref{sec:experiments_stat}, we find the simple multivariate variance estimation algorithm we propose significantly outperforms alternatives for both imputation and forecasting.
As before, a rigorous finite-sample analysis of this variance estimation algorithm is beyond the scope of this work.

\section{tspDB Implementation Details} \label{sec:tspDB_implementation_details}
\subsection{Prediction Model Storage in DB} \label{sec:system_storage}
To directly integrate tspDB with PostgreSQL, we store all model parameters in PostgreSQL itself.
For each model, the algorithms require storing: (i) the parameters associated with the appropriate truncated SVD (left/right singular vectors, and singular values); (ii) the linear regression coefficients. 
Importantly, we choose to store all of these parameters in the standard relational DB; thus, in response to a prediction query, the DB itself is queried to make predictions.
The specific schema used to store model parameters, and the relationship between them, is depicted in Figure \ref{fig:schema} (for the case $k_1 = k_2  = 3$). 
Note, the parameters associated with the variance estimation models are stored in an identical manner.
For a forecasting predictive query, we average the values of linear regression coefficients stored across models. 
Hence, we create a standard {\em materialized view} of the coefficient table to precompute the average weights of the last few models, for efficient querying.

\iftoggle{FIGURES}{
\begin{figure}[!htb]
	\centering
	\includegraphics[width=1.0\linewidth]{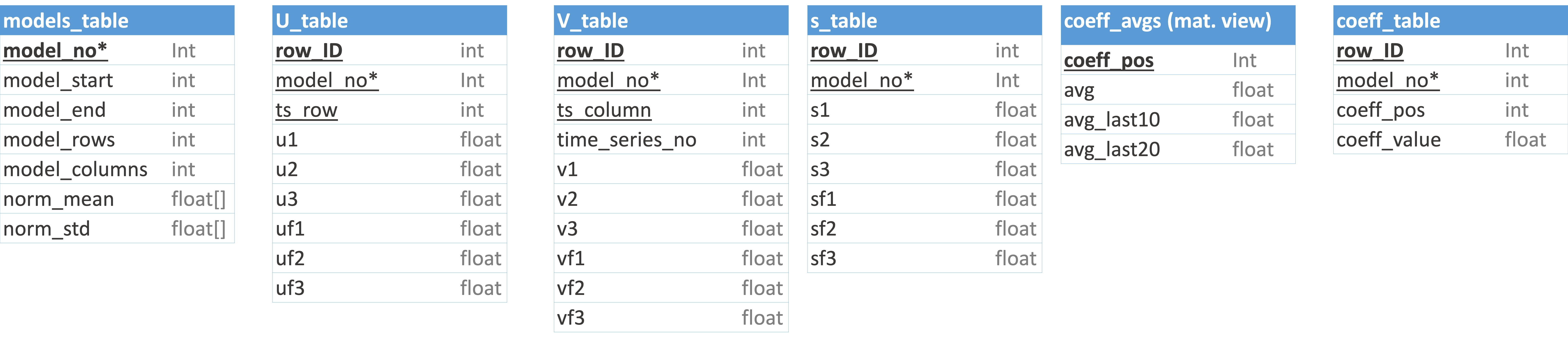}
	\caption{The schema used to store the prediction models. Note that bold attributes represent the primary keys, while underlined attributes are columns indexed by a B-tree. The columns u1, v1, s1, $\dots$ uk1, vk1, sk1) correspond to $\bZ^{X}$ and the columns  uf1, vf1, sf1, $\dots$, ufk2 ,vfk2, sfk2 correspond to $\btZ^{X}$. Refer back to Section \ref{sec:overview} for definition of $\bZ^{X}, \btZ^{X}$}
	\label{fig:schema}
\end{figure}
\vspace{-5mm} 
}

\subsection{Answering a PREDICT query in tspDB} \label{sec:system_query}
Recall the two prediction tasks supported in tspDB are imputation and forecasting. 
For both tasks, the system needs to provide a response, both of the estimated mean and the associated prediction interval (upper and lower bound) -- see Section \ref{sec:interface}. 
To answer a PREDICT query, the atomic response boils down to providing an estimation of the mean of the time series at a given $t$ with prediction interval of $c\%$ for $c \in (0, 100)$. 
This response is effectively constructed by estimating the mean, $\hat{f}_n(t)$ and the standard deviation $\hat{\sigma}_n(t)$.

\textbf{Creating a Prediction Interval.}
Using a Gaussian approximation, the $c\%$ prediction interval is given by,
\[
\left[\hat{f}_n(t) - \hat{\sigma}_n(t) \Phi^{-1}(\frac12 + \frac{c}{200}), \hat{f}_n(t) + \hat{\sigma}_n(t) \Phi^{-1}(\frac12 + \frac{c}{200})\right],
\]
where $\Phi: \Reals \to [0,1]$ denotes the Cumulative Density Function of standard Normal distribution with mean $0$ and variance $1$. 
One could alternatively utilize Chebyshev's inequality to obtain a more conservative answer as 
\[
\left[\hat{f}_n(t) - \frac{\hat{\sigma}_n(t)}{\sqrt{1-\frac{c}{100}}}, \ \hat{f}_n(t) + \frac{\hat{\sigma}_n(t)}{\sqrt{1-\frac{c}{100}}}\right].
\]
Either can be specified in tspDB.


\textbf{Answering Predictive Queries.}
Now we describe how a PREDICT query in tspDB for a given $t$ with $c\%$ prediction interval is answered. 
To start with, we determine whether it is an imputation task, i.e. $t \in [T]$ or a forecasting task, i.e. $t > T$. 
For each of these cases, we respond as follows.

\vspace{2mm}

\textbf{Imputation}: $\hat{f_n}(t): t \in [T], n \in[N]$
\footnote{Note that answering an imputation query for a given range  $\{t_1, \dots, t_2\}$ follows a similar procedure, except that it will potentially query multiple rows from \texttt{V_table}, \texttt{U_table}, and \texttt{s_table}.}. 
\renewcommand{\theenumi}{\arabic{enumi}}
\begin{inparaenum}
\itemsep0em 
    \item (Find sub-model) Let $i = i(t) = \max(0, \left\lfloor\frac{2t \times N}{T'}\right\rfloor  - 1 )$. If $i = 0$ and $t < T'/2N$, use $\cI(t) = \{0\}$ else $\cI(t) = \{i(t), i(t)+1\}$.  
    \item (Find Row, Column Indices) Let $\trow(j) = \big(t - \frac{jT'}{2N}\big) \mod L$, $\tcol(j) = N \Big\lfloor \big(t - \frac{jT'}{2}\big) / L \Big\rfloor + (n-1)$ be row, column indices of matrix corresponding to the 
    models with $j \in \cI(t)$. 
    
	\item (Find Truncated SVD for Mean) Query left, right singular vectors $U^j, V^j$ respectively from \texttt{U_table}, \texttt{V_table} for model corresponding to mean values with index 
	$j \in \cI(t)$ along with singular values $S^j$ from \texttt{s_table}. 
	
	\item (Produce Mean Estimate) Set: 
	$\hat{f_n}(t) = \frac{1}{|\cI(t)|} \sum_{j \in \cI(t)} \sum_k U^j_{\trow(j) k} V^j_{\tcol(j) k} S^j_{k}.$

	\item (Find Truncated SVD for Second Moment) Query left, right singular vectors $\tU^j, \tV^j$ respectively from \texttt{V_table}, \texttt{U_table} for model corresponding to second moment with index $j \in \cI(t)$ along with singular values $\tS^j$ from  \texttt{s_table}. 
	
	\item (Produce Variance Estimate) Set $\hat{\sigma_n}^2(t) = \max(0, \widehat{f_n^2 + \sigma_n^2}(t) - \hat{f_n}(t)^2)$, where:
	
	$\widehat{f_n^2 + \sigma_n^2}(t)  = \frac{\sum_{j \in \cI(t)} \sum_k \tU^j_{\trow(j) k} \tV^j_{\tcol(j) k} \tS^j_{k}.}{|\cI(t)|} $.
	
	\item (Output Prediction Interval) Output interval using $\hat{f_n}(t), \hat{\sigma_n}(t)$ for queried confidence $c\%$.

\end{inparaenum}

\vspace{2mm}

\textbf{Forecasting}:  $\hat{f_n}(t): t > T,  n \in[N]$. 
\renewcommand{\theenumi}{\arabic{enumi}}
\begin{inparaenum}
\itemsep0em 
\item (Retrieve History) Query the last L-1 observations $X_n(T - L+1:T))$.
\item For $t_0 \in \{T -L+1, \dots, T\}$, set $g^m_n(t_0) = X_n(t_0)$ if $X_n(t_0)$ is not missing, otherwise set it to zero. Similarly, set  $g^v_n(t_0) = X^2_n(t_0)$
\item (Obtain Coefficients) Obtain a certain  coefficients average from the materialized view (e.g. average of last 10 models) 
for means $\hat{\beta}^m$ and variances $\hat{\beta}^v$. 
\item (Sequential Forecasting) For $\tau \in \{T+1, \dots, t\}$ produce estimate of means and variances 
as
$ g^m_n(\tau) = \sum_{\ell=1}^{L-1} g^m_n(\tau-\ell) \hat{\beta}^m_\ell, ~\text{and}~
g^v_n(\tau) = \sum_{\ell=1}^{L-1} g^v_n(\tau)(\tau-\ell) \hat{\beta}^v_\ell$.

\item (Output Prediction Interval) Output estimate $\hat{f_n}(t) = g^m_n(t) $ and its  prediction interval using $\hat{\sigma_n}(t) = g^v_n(t) $ 
for queried confidence $c\%$.
\end{inparaenum}


\vspace{-4mm}
\section{Statistical and Computational Benchmarking of tspDB}\label{sec:experiments}
In this section, we detail the extensive testing we conduct to benchmark tspDB's statistical and computational performance against popular state-of-the-art prediction libraries, and just its computational performance against PostgreSQL. 
We recall the main statistical and computational results from this section are summarized in Table \ref{table:full} in Section \ref{sec:introduction}.
In Section \ref{sec:set_up}, we detail the experimental setup, e.g., datasets, machine configuration, algorithm hyper-parameters used. 
In Section \ref{sec:experiments_stat} and \ref{sec:computational_experiments}, we extensively benchmark tspDB's statistical and computational performance against other state-of-the-art time series prediction methods. 
In Section \ref{sec:postgres_computational_experiments}, we benchmark tspDB's computational performance against PostgreSQL with respect to standard DB metrics.

\vspace{-3mm}
\subsection{Setup}\label{sec:set_up}

\textbf{Datasets.} 
Throughout the experiments, we use three real-world datasets that are standard benchmarks (e.g. used
in \cite{TRMF}) in time series analysis as well as three synthetic datasets. 
We use these synthetic datasets so we can compare with the latent mean and variance which are of course not observable in real-world data.
The real-world datasets come from three domains: electricity \cite{ucielec}, traffic \cite{ucitraffic} and finance \cite{WRDS}. 
For further details on the datasets used, refer to Appendix \ref{sec:data_desc}.

\vspace{1mm}

\textbf{Machine and DB Configuration.} \label{sec:setup}
In all experiments, we use an Intel Xeon  E5-2683 machine with 16 cores, 132 GB of RAM, and an SSD storage
\footnote{Note all experiments were replicated on different machine configurations as well by replacing PostgreSQL by TimeScaleDB \cite{timescale}. We find the metrics of interest remain similar. We omit these results due to space constraints.}.
In Table \ref{DBSpec} in Section \ref{sec:system_config},  we detail the relevant settings used for PostgreSQL 12.1. 

\vspace{1mm}

\textbf{Algorithms Setup and Hyper-parameters.} 
Throughout the experiments, we compare with several state-of-the-art algorithms. Including  LSTM \cite{LSTM},  DeepAR \cite{DeepAR},  TRMF \cite{TRMF}, and Prophet \cite{Prophet}.
In all reported results, tspDB's hyper-parameters  ($L, k, k_1, k_2, T_0, T', \gamma$) are set in an automated manner, as detailed Appendix \ref{sec:hyperparameters_selection}.
Refer to Appendix \ref{sec:exp_appendix_methods} for more information about the implementations and parameters used for all other methods.

\vspace{1mm}

\textbf{Accuracy Metrics.} 
As stated earlier, for each experiment, we use Normalized Root-Mean-Square-Error (NRMSE) as the accuracy metric, where normalization corresponds to rescaling each time series to have zero mean and unit variance before calculating the RMSE. 
Further, to quantify the statistical accuracy of the different methods used, we use a variant of the standard Borda Count (weighted by the NRMSE in each experiment), which we denote as WBC. 
As explained in Appendix \ref{sec:score_metric}, WBC lies between $0$ and $1$: 
for a given algorithm, $0.5$ means it does as well as all other algorithms on average over all experiments; $1$ (resp. $0$) means it does `infinitely' better (resp. worse). 
Again, we emphasize that by simply looking at the mean NRMSE as is normally done, tspDB's relative performance is even better (see Table \ref{table:stat}).
We use WBC as we believe it better summarizes the statistical accuracy of the various algorithms used across the various experiments.

{
\vspace{-2mm}
\subsection{Statistical Benchmarking of tspDB}\label{sec:experiments_stat}
}
\begin{table}[]
\fontsize{8.9pt}{8.9pt}\selectfont
\tabcolsep=0.11cm
\centering
\caption{tspDB outperforms state-of-the-art algorithms in mean and variance estimation across different datasets. We use WBC (higher is better) and average NRMSE (lower is better)  to summarize the results.  }
\label{table:stat}
\begin{tabular}{@{}ccccccccc@{}}
\toprule
        & \multicolumn{4}{c}{\begin{tabular}[c]{@{}c@{}} Mean Imputation \\ (WBC/NRMSE)\end{tabular}}                                        & \multicolumn{4}{c}{\begin{tabular}[c]{@{}c@{}}Mean Forecasting  \\ (WBC/NRMSE)\end{tabular}}                                  \\  \cmidrule[0.5pt](lr){2-5} \cmidrule[0.5pt](lr){6-9} 
        & \multicolumn{1}{c}{Electricity} & \multicolumn{1}{c}{Traffic} & \multicolumn{1}{c}{Synthetic I} & \multicolumn{1}{c}{Financial} & \multicolumn{1}{c}{Electricity} & \multicolumn{1}{c}{Traffic} & \multicolumn{1}{c}{Synthetic I} & \multicolumn{1}{c}{Financial}                                                     \\ \midrule
tspDB    & \textbf{0.61}/\textbf{0.39}                             & \textbf{0.50}/\textbf{0.49}                        & \textbf{0.53}/\textbf{0.25}                             & \textbf{0.65}/\textbf{0.28}              &              \textbf{0.53}/ \textbf{0.48} &  {0.50}/0.53  &  \textbf{0.70}/\textbf{0.20}  &  \textbf{0.66}/\textbf{0.36}                                                        \\
LSTM    & NA                              & NA                          & NA                              & NA                            &0.49/0.55 &  \textbf{0.54}/\textbf{0.47} &  0.45/0.44 &  0.31/1.20                                                                                           \\
DeepAR  & NA                              & NA                          & NA                              & NA                          & \textbf{0.53}/\textbf{0.48} &  0.53/\textbf{0.47} &  0.53/0.33 &  0.65/0.40                                                                                           \\
TRMF    & 0.39/0.69                            &  \textbf{0.50}/0.51                        & 0.47/0.33                            & 0.35/0.51                      & 0.50/0.53 &  0.48/0.57 &  0.61/0.27 &  0.58/0.46        \\
Prophet & NA                              & NA                          & NA                              & NA                            &   0.47/0.58 &  0.45/0.62 &  0.22/1.01 &  0.30/1.30                                                                                                   \\                                                                                     
\bottomrule
\end{tabular}
\end{table}
%


\textbf{Mean Imputation.} 
{
We compare the imputation accuracy of tspDB against TRMF, a popular method for imputing (and forecasting) multivariate time series data. 
We choose it as the single point of comparison as in the work of \cite{impsurvey}, the authors conduct extensive benchmarking of various time series imputation methods, and find TRMF to have state-of-the-art statistical and computational performance. 
We note the other time series libraries we compare against, e.g., LSTM, DeepAR, do not have imputation functionality.
}  

\hspace{4mm} 
We evaluate both tspDB and TRMF using the three real-world datasets and one synthetic dataset (synthetic I) introduced in Section \ref{sec:set_up}. 
To test the algorithms' robustness, we corrupt the various datasets by artificially masking entries with probability $(1-p)$, and by adding zero-mean Gaussian noise to each entry with varying standard deviation $\sigma$.
As we vary $p$ and $\sigma$ for all four datasets, we find tspDB outperforms TRMF in $80$\% of experiments.
In  Table \ref{table:stat}, we summarize the statistical performance of tspDB against TRMF  in terms of  WBC  and the average NRMSE for each dataset across all experiments.
We find that using both metrics,  tspDB outperforms TRMF on all datasets. 
%

\vspace{1mm}%
\textbf{Mean Forecasting.} 
We compare the forecasting performance of tspDB  against LSTM,  DeepAR, TRMF, and Prophet---these are some of the most popular time series libraries used in academia and industry.
Using the same datasets used for mean imputation and varying $p$ and $\sigma$ in the same way, we find that tspDB performs competitively with deep-learning algorithms (DeepAR and LSTM), and outperforms both TRMF and Prophet.  
Specifically, as we vary the fraction of missing values and the noise added, tspDB is the best performing method in 50\% of experiments, and at least the second best performing in $80\%$ of experiments. 
To summarize the statistical performance of all algorithms, we compute their WBC and the average NRMSE for each dataset in Table \ref{table:stat}; we find tspDB, using both metrics,  outperforms all other algorithms in the electricity,  financial and synthetic I datasets, while performing competitively with DeepAR and LSTM in the traffic dataset.

\vspace{1mm}
\textbf{Variance Estimation.} 
%
We restrict our analysis to synthetic data as we do not get access to the true underlying time-varying variance in real-world data.
We use the dataset synthetic II, which consists of nine sets of multivariate time series each with a different additive combination of time series dynamics and a different noise observation model (Gaussian, Poisson, Bernoulli noise). See Appendix \ref{sec:data_desc} for details on how synthetic II was generated. 
We again vary the fraction of observed data {$p \in \{1.0, 0.8, 0.5\}$}. 
For imputation, to the best of our knowledge, there is no algorithm which produces prediction intervals.
Hence we use an adaptation of TRMF to benchmark tspDB's performance---note though it does not natively support variance estimation. 
For forecasting, we compare with DeepAR as it has in-built functionality to produce confidence intervals for its predictions.
Refer to Appendix \ref{sec:exp_appendix_methods} for details of how we use these algorithms to do variance estimation.

\hspace{4mm}  We find tspDB has superior performance in all but one experiment ($>98\%$) over both the adapted version of TRMF (for imputation) and DeepAR's in-built functionality (for forecasting).
See Table \ref{table:var_imp} for the NRMSE values for each experiment along with a summary of the performance of each algorithm using WBC and the mean NRMSE across experiments.  
In summary, we find tspDB outperforms TRMF in variance imputation and  DeepAR  in variance forecasting across both metrics (WBC, mean NRMSE).
Specifically, with respect to imputation, we find tspDB outperforms TRMF in all but one experiment,  where the ratio of TRMF's error to tspDB's in the range of $0.97$-$2.27$ (see Table \ref{table:var_imp}). 
With respect to forecasting, we find tspDB outperforms DeepAR in all experiments, where the ratio of DeepAR's error  to tspDB's is in the range $1.01$-$1870.59$ (see Table \ref{table:var_imp}).
tspDB's performance is notably superior when dealing with integer (e.g., Poisson generated) observations.  
Collectively, these experiments show the robustness of tspDB, in that it is  “noise model agnostic” when estimating the mean and variance of a time series.

\begin{table}[h]
\centering
\caption{ tspDB outperforms both TRMF (in variance imputation) and DeepAR (in variance forecasting).}
\label{table:var_imp}
\fontsize{7.5pt}{8.0pt}\selectfont
\tabcolsep=0.03cm
\begin{tabular}{@{}lllllllllllllll@{}}
\toprule
\multicolumn{1}{l}{\multirow{2}{*}{\begin{tabular}[c]{@{}c@{}}Observation \\ Model\\ \end{tabular}}} & \multicolumn{1}{l}{\multirow{2}{*}{\begin{tabular}[c]{@{}c@{}}Time Series \\ Dynamics\\ \end{tabular}}}       & \multicolumn{3}{c}{ \begin{tabular}[c]{@{}c@{}}tspDB \\ (Imputation)\\ \end{tabular}}  & \multicolumn{3}{c}{ \begin{tabular}[c]{@{}c@{}}TRMF \\ (Imputation)\\ \end{tabular}} & \multicolumn{3}{c}{ \begin{tabular}[c]{@{}c@{}}tspDB \\ (Forecasting)\\ \end{tabular}}  & \multicolumn{3}{c}{ \begin{tabular}[c]{@{}c@{}}DeepAR \\ (Forecasting)\\ \end{tabular}}  \\\cmidrule[0.5pt](lr){3-5} \cmidrule[0.5pt](lr){6-8}  \cmidrule[0.5pt](lr){9-11} \cmidrule[0.5pt](lr){12-14}  
\multicolumn{2}{l}{}                                       & p = 1.0      & p = 0.8     & p = 0.5  & p = 1.0      & p = 0.8     & p = 0.5    & p = 1.0      & p = 0.8     & p = 0.5    & p = 1.0      & p = 0.8     & p = 0.5        \\ \midrule
\multirow{3}{*}{Gaussian}               & Har                     & \textbf{0.076}          & \textbf{0.099} &\textbf{0.118}    & 0.122 &0.125&0.141 & \textbf{0.106}&\textbf{0.144}& \textbf{0.156}&  0.170&0.184&0.289\ \\
             					   & Har +  trend                &\textbf{0.075} & \textbf{0.091} & \textbf{0.103}   &0.133& 0.135& 0.142  &\textbf{0.154}&\textbf{0.155}&\textbf{0.247 }  &0.286&0.232&0.269  \\
  						  & Har+AR+trend & \textbf{0.074}          & \textbf{0.090}& \textbf{0.101}      & 0.134 &0.136&0.146 & \textbf{0.173}&\textbf{0.263}&\textbf{0.337} &0.214&0.265&0.388 \\  \midrule

\multirow{3}{*}{Poisson}   
& Har &\textbf{0.126}&\textbf{0.132} &\textbf{0.150}&0.137&0.138&0.151   & \textbf{0.143}&\textbf{0.152}&\textbf{0.157 }         &13.20 &148.5 &     90.20             \\
 & Har+ trend     & \textbf{0.086}&\textbf{0.087}&\textbf{0.101}    &0.176& 0.187&0.194    & \textbf{0.093}&\textbf{0.182}&\textbf{0.199} &    0.491 &1.403&2.163      \\
 & Har+AR+trend &\textbf{0.081}&\textbf{0.088}&\textbf{0.104}   &0.184&0.187&0.204 & \textbf{0.093}&\textbf{0.182}&\textbf{0.199} &     1.386& 34.45&162.5   \\  \midrule
 
\multirow{3}{*}{Bernoulli} 
& Har    & \textbf{0.024}&\textbf{0.027 }  & 0.030   & 0.026 &0.027&\textbf{0.029}  & \textbf{0.029}   &\textbf{0.050}&\textbf{0.033 }   &0.073 & 0.072&      0.077       \\
& Har +  trend    &\textbf{0.022}&\textbf{0.025}&\textbf{0.025 }&0.030&0.030&0.032  & \textbf{0.029}&\textbf{0.048}&\textbf{0.059}   &0.049&0.068&0.076   \\
& Har+AR+trend & \textbf{0.022}&\textbf{0.024}&\textbf{0.025 } &0.030 &0.030&0.032& \textbf{0.036}&\textbf{0.036}&\textbf{0.056}         &    0.070 &0.082&0.111   \\  \cmidrule[1.2pt](){1-14}  
\multirow{2}{*}{Summary} & WBC  &\multicolumn{3}{c}{\textbf{0.597}}  &\multicolumn{3}{c}{0.403}   &\multicolumn{3}{c}{\textbf{0.726}}  &\multicolumn{3}{c}{0.264}    \\ 
& Mean NRMSE &\multicolumn{3}{c}{\textbf{0.070}}  &\multicolumn{3}{c}{{0.112}}   &\multicolumn{3}{c}{\textbf{0.132}}  &\multicolumn{3}{c}{{16.9}}      
\\\cmidrule[1.2pt](){1-14}  
\end{tabular}
\vspace{-4mm}
\end{table}

\subsection{tspDB's Robustness to Noise Models} \label{sec:exp_noise}
In practice, the same latent time series dynamic can lead to very different observations depending on the type of noise that is present (e.g., Gaussian, Poisson, Bernoulli). 
%
%
Thus, as an additional experiment, we showcase tspDB's robustness against different noise models.
Pleasingly, we find tspDB is ``noise agnostic'', i.e., it effectively imputes and forecasts the latent time-varying mean {\em without knowledge of the underlying noise model}.

\hspace{4mm}  For these set of experiments, we use dataset synthetic III (see Appendix \ref{sec:data_desc} for details);
we generate three noise models---Gaussian (i.e., float) , Bernoulli (i.e., boolean), and Poisson (i.e., integer)---all with the same latent  time series dynamics (normalized to lie within the interval $[0, 1]$) as shown in Figures \ref{fig:noise1}, \ref{fig:noise2}, \ref{fig:noise3},  respectively.
We find that tspDB's imputation error in RMSE/$R^2$ is ($0.079$/$0.854$), ($0.078$/$0.858$) and ($0.102$/$0.758$) for the Gaussian, Bernoulli and Poisson observation models respectively.
Similarly, the forecasting error in RMSE/$R^2$ is  ($0.041$/$0.979$), ($0.083$/$0.914$) and ($0.110$/$0.851$) for the Gaussian, Bernoulli and Poisson observation models respectively.
Figure \ref{fig:noise} gives a visual depiction of how tspDB effectively imputes and forecasts under Gaussian, Bernoulli and Poisson observation models. 
%

\iftoggle{FIGURES}{
\begin{figure}[!htb]
\vspace{-2mm} 
\floatconts
{fig:noise}
{\caption{ Without knowledge of whether the observations come from a time-varying Gaussian process (i.e. floats, see Figure \ref{fig:noise1}), a Bernoulli process (i.e. binary data,  see Figure \ref{fig:noise2}), or a Poisson process (i.e. integers,  see Figure \ref{fig:noise3}),  tspDB successfully imputes and forecasts the underlying mean.}}
{
	\subfigure[][t]{\label{fig:noise1}
		\includegraphics[width=0.30\linewidth]{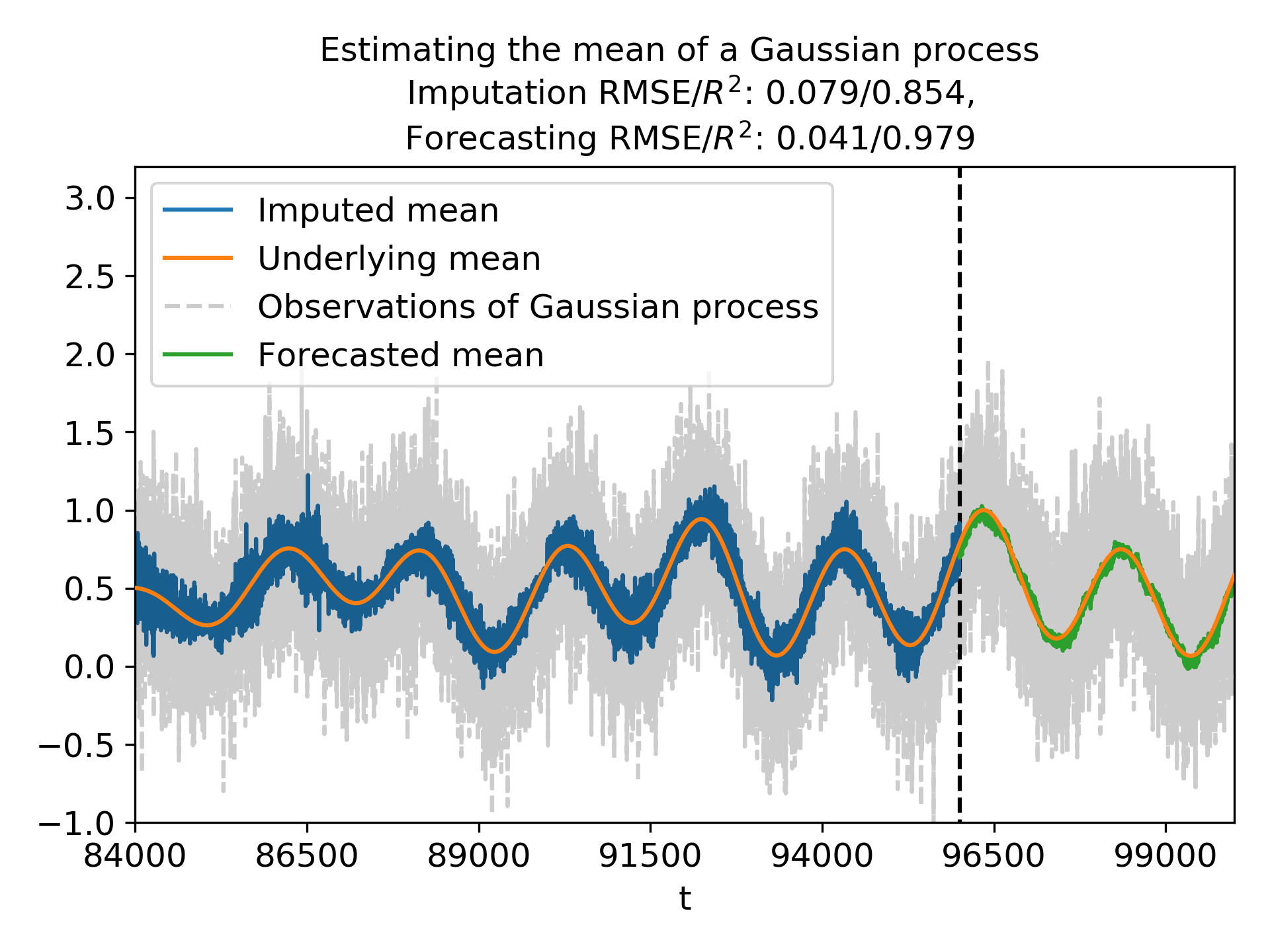}}
		\quad
	\subfigure[][t]{\label{fig:noise2}
		\includegraphics[width=0.30\linewidth]{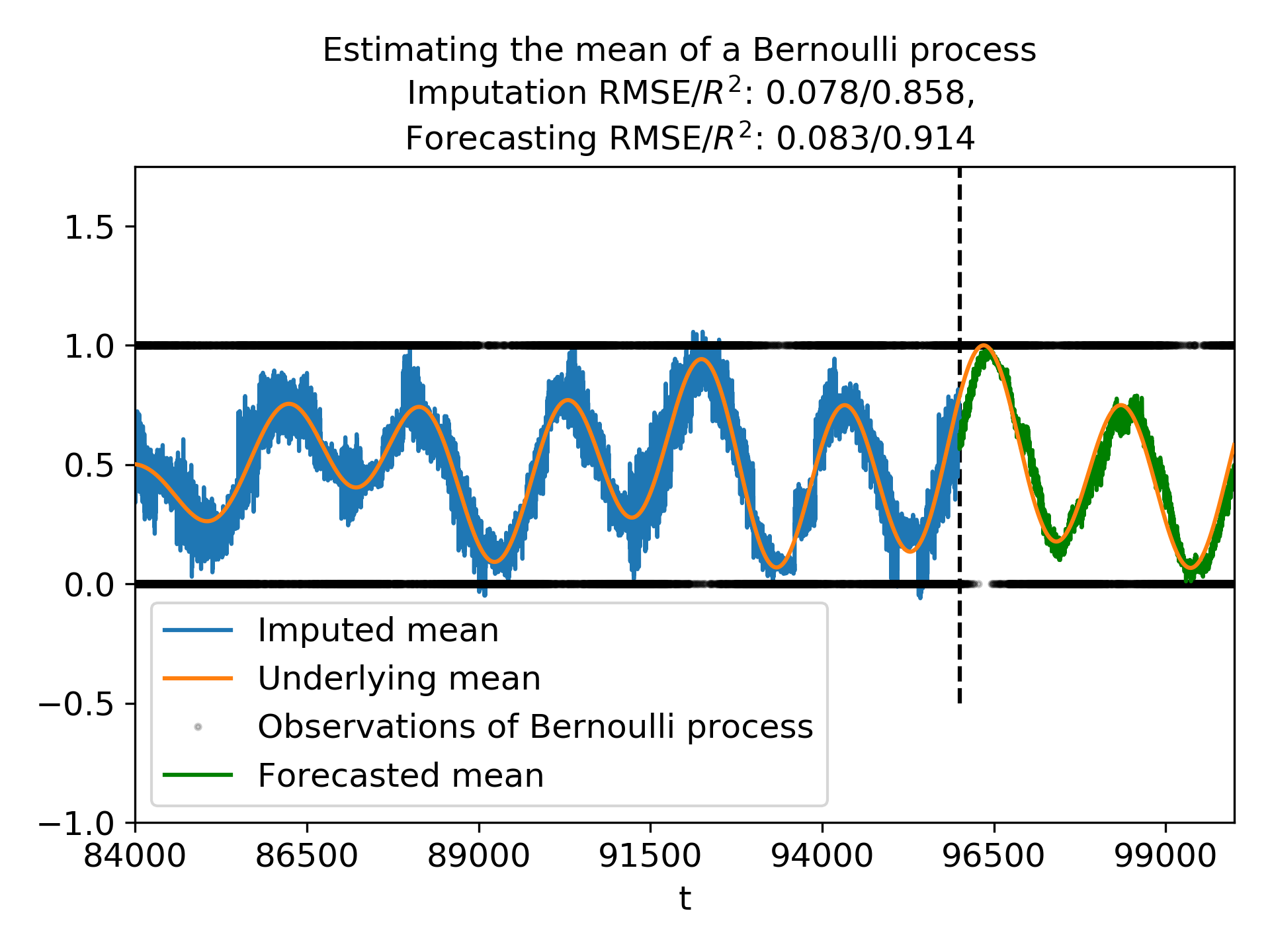}}
		\quad
	\subfigure[][t]{\label{fig:noise3}
		\includegraphics[width=0.30\linewidth]{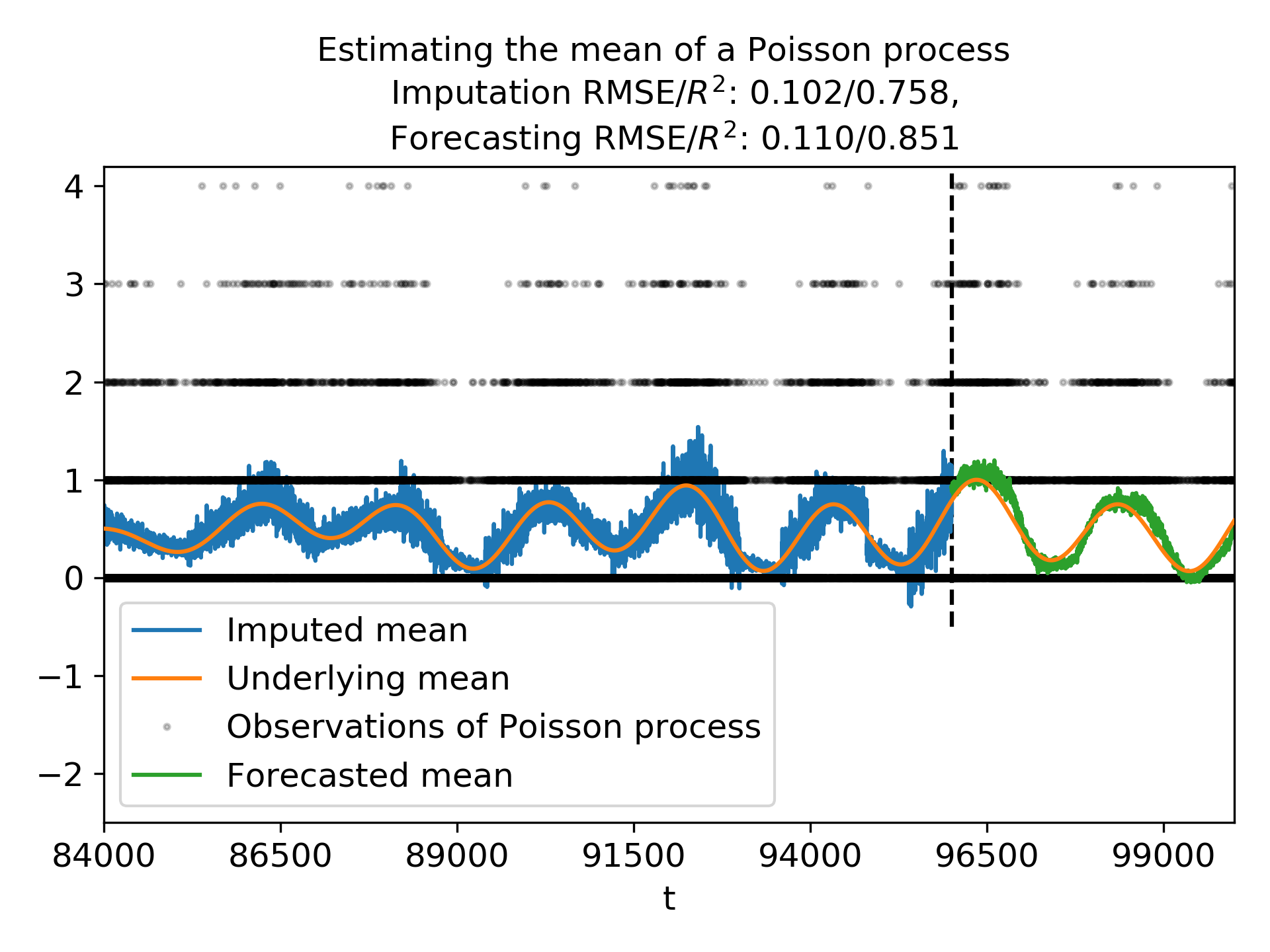}}
	\label{fig:noise}
	\vspace{-5mm} 
}
\vspace{-5mm} 
\end{figure}
}

\vspace{-3mm}
{
\subsection{Computational Benchmarking of tspDB vs. Other Prediction Algorithms}\label{sec:computational_experiments}
}
\textbf{tspDB vs. Other Prediction Methods.}  

\begin{table}[b]
\fontsize{8.9pt}{8.9pt}\selectfont
\caption{tspDB has significantly less training time, prediction query latency compared to  alternatives.}
\label{table:comp}
\centering
\begin{tabular}{@{}lllllllll@{}}
\toprule
           & \multicolumn{4}{c}{\begin{tabular}[c]{@{}c@{}}Training/Insert Time \\ (seconds)\end{tabular}} & \multicolumn{4}{c}{\begin{tabular}[c]{@{}c@{}}Prediction/Querying Time \\ (milliseconds)\end{tabular}} \\ \cmidrule[0.5pt](lr){2-5} \cmidrule[0.5pt](lr){6-9} 
           & Electricity             & Traffic             & Synthetic I            & Financial            & Electricity               & Traffic               & Synthetic I               & Financial              \\ \midrule
tspDB       & \textbf{11.8}                   & \textbf{14.4}               & \textbf{6.2}                   & \textbf{2.3}                  & \textbf{3.3}                       & \textbf{3.9}                   &  3.2                   &      \textbf{3.6}                 \\
LSTM       & 1357.5                  & 661.2               &      450.9             & 97.9                 & 295.6                     & 358.3                 & 278.7                      &  354.8                \\
DeepAR     & 907.6                   & 572.2               &   491.2                 &      144.5          & 366.6                     & 378.5                 &   285.8                 &        207.2             \\
TRMF       & 34.5                    & 14.8                &  12.4                   &  3.6                &       7.3          &      6.0            &            \textbf{2.5}            &   4.6                 \\
Prophet    & 6868.8                 & 4726.3             &    535.9              &         2165.4        & 1473.4                    & 1458.8                & 1480.3                     &     1434.0            \\
\bottomrule 
\end{tabular}
\vspace{-4mm}
\end{table}

{
As stated earlier, we evaluate the computational performance of all algorithms using two metrics:
(i) the time it takes to train a prediction model on a given dataset---each of the methods we benchmark against exposes a ``.fit()'' interface or equivalent (similar to CREATE prediction_model in tspDB), and we measure the time it takes for a prediction model to be returned from when ``.fit()'' is executed;
(ii) the time required by the model to produce a point forecast for a time series---each of the methods we benchmark against exposes a ``.predict()'' interface or equivalent (similar to a PREDICT query in tspDB), and we measure the time it takes for a fitted prediction model to return a forecast for a queried future time step (e.g., the time required to do a forward pass in an LSTM network to predict the stock price tomorrow for a certain company).
We evaluate all algorithms using the machine configuration detailed in Section \ref{sec:set_up}.  
The hyper-parameters chosen for the various algorithms are detailed in Appendix \ref{sec:exp_appendix_methods}; just like for tspDB, we choose the default parameters in the open-source implementations.
}

\hspace{4mm} In Table \ref{table:comp}, we report results for these experiments. 	
With respect to time taken to train a model, we find tspDB is {$42.4$-$114.9$x} faster than LSTM, {$39.7$-$79.2$x} faster than DeepAR,  {$1.03$-$2.92$x} faster than TRMF,  and {$86.4$-$832.7$x} faster than Prophet. 
%
With respect to prediction queries, tspDB's latency is  {$86.0$-$99.4$x} faster than LSTM's,  {$58.0$-$110.4$x} faster than DeepAR's, {$0.8$-$2.2$x} the latency of TRMF's,  and {$370.3$-$456.9$x} faster than Prophet's.

\vspace{1mm}

\subsection{Computational Benchmarking vs. PostgreSQL}\label{sec:postgres_computational_experiments}

\vspace{-1mm}
\textbf{tspDB vs. PostgreSQL.} 
On the same datasets we use to benchmark tspDB against these other algorithms, we compare: 
(i) tspDB's training time (i.e. CREATE predict model query) vs. PostgreSQL's bulk insert time;
(ii) tspDB's PREDICT query latency against PostgreSQL's SELECT query latency.
We find tspDB's model training time ranges from {$0.58$-$1.52$x} to that of the insert time of PostgreSQL. 
We evaluate the prediction query latency with and without uncertainty quantification (UQ), i.e., producing prediction intervals via variance estimation---see Section~\ref{sec:system_query} for details on how we do UQ. 
Compared with a SELECT query in PostgreSQL, imputation queries are  {$1.64$-$2.67$x} slower, and 
forecasting queries are  {$1.67$-$2.77$x} slower.
If UQ is added, queries are {$3.34$-$5.35$x} and {$3.42$-$5.48$x} slower for imputation and forecasting, respectively.
This amounts to $1.29$-$2.36$ milliseconds for SELECT queries, $3.22$-$3.87$ milliseconds for imputation 
queries ($5.65$-$7.88$ milliseconds with UQ), and  $3.24$-$3.94$ milliseconds for forecasting queries ($5.67$-$8.08$ milliseconds with UQ).
Refer to Table \ref{table:comp_postgres} for a summary of results. 


\begin{table}[!th]
\vspace{-2mm}
\centering
\fontsize{8.9pt}{8.9pt}\selectfont
\caption{$N$ and $T$ refer to number of time series and number of observations per time series respectively.}
\label{table:comp_postgres}
\begin{tabular}{@{}lllllll@{}}
\toprule
            & \multicolumn{1}{c}{\multirow{2}{*}{}} & \multicolumn{2}{c}{\begin{tabular}[c]{@{}c@{}}Training/Insert Time\\ (seconds)\end{tabular}} & \multicolumn{3}{c}{\begin{tabular}[c]{@{}c@{}}Query Latency\\ (milliseconds)\end{tabular}}                                                                                                                          \\ \cmidrule[0.5pt](lr){3-4} \cmidrule[0.5pt](lr){5-7} 
            & \multicolumn{1}{c}{$N / T$}                       & tspDB                                       & PostgreSQL                                      & \begin{tabular}[c]{@{}l@{}}tspDB's Forecast  \\ /with UQ \end{tabular} & \begin{tabular}[c]{@{}l@{}}tspDB's Imputation\\ /with UQ\end{tabular} & \begin{tabular}[c]{@{}l@{}}PostgreSQL's\\  SELECT\end{tabular} \\ \midrule
Electricity & 370 / 25968                                & 11.81                                      & 8.34                                            & 3.32/7.02                                                                & 3.25/6.93                                                              & 1.34                                                           \\
Traffic     & 963 / 10392                                & 14.42                                      & 9.50                                            & 3.94/8.08                                                             & 3.87/7.88                                                             & 2.36                                                           \\
Synthetic I & 400 / 15000                                &      6.20                                   & 4.93                                            & 3.24/5.67                                                                 & 3.22/5.65                                                             & 1.31                                                           \\
Financial   & 839 / 3993                                 & 2.31                                        & 4.00                                            & 3.57/7.07                                                                & 3.47/6.90                                                              & 1.29                                                           \\ \bottomrule
\end{tabular}
\vspace{-2mm}
\end{table}

\textbf{Scalability of tspDB.} 
We compare how the computational performance of tspDB relative to PostgreSQL scales with respect to the metrics we consider as we vary the amount of data inserted. 
{We generate a single time series that is a sum of harmonics and then corrupt it with additive Gaussian noise.
Specifically, we generate $X(t) = f(t) + \epsilon(t)$, where $\epsilon(t) \sim \mathcal{N}(0, 0.1)$  and  $f(t) = \sum_{i=1}^4 \alpha_i \cos(\omega_i t/T) $, where $t \in [T]$,  $\alpha_i \in [-1.5,1.5]$ and $\omega_i \in [1,100]$ (randomly selected).
We vary the amount of data points $(T)$ between {$10^4$ to $10^8$}.
}
Using this time series data, we create a table with the following simple schema: \texttt{synthetic(time:timestamp, time_series: float)}, where the column \texttt{time} is the primary key, indexed by the default PostgreSQL DB index (B-tree data structure); 
we note that indexing the time column is a standard practice in time series DBs.

\vspace{1mm}
{\em Throughput.}
As before, we evaluate the time taken to create a prediction model in tspDB as we vary the number of rows inserted from {$10^4$ to $10^8$} and compare it against the time needed to insert the same rows into the aforementioned indexed table.
We find that the time required to train tspDB's  prediction model  is {$1.13$-$2.17$x} faster than PostgreSQL insert time as we vary the dataset size. 
In absolute terms, the time to train tspDB's  prediction model is {$2.62$-$7.78$} microseconds per record.
See Figure \ref{fig:thpt}.
Since the time required to create the prediction model in tspDB is comparable to PostgreSQL's bulk insert time, and the two operations are independent and so can be effectively run in parallel, integrating tspDB with PostgreSQL will pleasingly not bottleneck PostgreSQL's throughput as new data points are inserted.

\vspace{1mm}
{ \em Query Latency.}
As before, we compare the latency of a standard SELECT point query in PostgreSQL against the a PREDICT query in tspDB. 
We evaluate the latency for both imputation and forecasting predictions with and without UQ. 
As shown in Figure \ref{fig:latency}, we find that imputation PREDICT queries are  {$1.77$-$2.56$x} slower relative to SELECT queries as we vary the table size, and forecasting PREDICT queries are {$3.87$-$6.39$x} slower.
PREDICT queries with UQ are {$3.04$-$4.60$x} and {$6.99$-$12.80$x} slower for imputation and forecasting queries respectively.
In absolute terms, this amounts to $0.41$-$0.61$ milliseconds for SELECT queries, $0.94$-$1.23$ milliseconds for imputation queries ($1.72$-$2.11$ milliseconds with UQ), and  $1.73$-$3.04$ milliseconds for  forecasting queries ($3.02$-$5.85$ milliseconds with UQ).

\iftoggle{FIGURES}{
\begin{figure}[!htb]
\vspace{-2mm} 
\floatconts
{fig:var}
{\caption{(a) Training tspDB prediction model is $1.13$-$2.17$x faster than PostgreSQL's bulk insert. (b) Predictions using tspDB  are only $1.77$x to $6.39$x slower than standard SELECT queries.}}
{
	\subfigure[][t]{\label{fig:thpt}
		\includegraphics[width=0.38\linewidth]{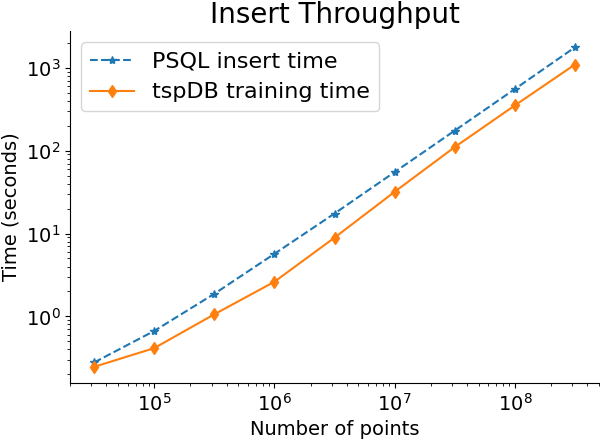}}
	\quad 
	\subfigure[][t]{\label{fig:latency}
	\includegraphics[width=0.38\linewidth]{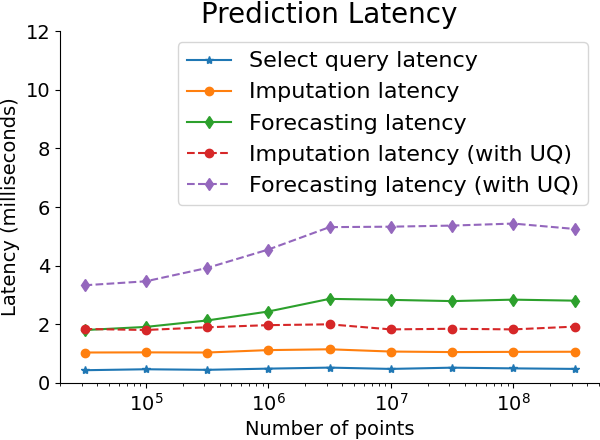}}
	\vspace{-5mm} 
}
\vspace{-7mm} 
\end{figure}
}


\vspace{2mm}
\section{Statistical and Computational Tradeoffs in tspDB} \label{sec:appendix_stat_experiments}
We detail two additional set of experiments we conduct to study the statistical and computational tradeoffs in our implementation of tspDB.
In Section \ref{sec:multi_column}, we explore this tradeoff between multivariate and univariate time series prediction models -- we find using multiple time series can greatly improve accuracy with a moderate slow down in training time and query latency. 
However, beyond a certain point, the gain in statistical accuracy from adding more time series ($ \sim N > 50$) becomes minimal and can add unnecessary overhead.
In Section \ref{sec:appendix_comp_experiments}, we study how the three key parameters in the scalable implantation of the prediction algorithm in tspDB, $T_0, T', \gamma$ (see Section \ref{sec:inc_algorithm} for their definition), affect tspDB's statistical and computational performance.
We select the default parameters in tspDB based on these experiments.

\subsection{Multivariate vs. Univariate Prediction Models in tspDB} \label{sec:multi_column}
In our proposed algorithm, a prediction model is trained using data from a collection of time series. 
If this collection of time series are \textit{carefully} selected (i.e., have high ``correlation"), this should lead to better statistical accuracy (see Section \ref{sec:justification} for a justification).
However, the drawback is that  training on multiple time series leads to higher model training time and prediction query latency. 
We evaluate this tradeoff in tspDB's performance across the three metrics of interest.

\hspace{4mm} Specifically, we use  the synthetic I and electricity datasets (detailed in Appendix \ref{sec:data_desc}).
We evaluate tspDB's performance as we vary the number of time series, $N$, used in training the prediction model.
{$N \in \{1, 10, 40, 80, 160, 400\}$} and {$N \in \{1,10,40,80, 150, 370\}$} for the synthetic and electricity datasets, respectively.
We report performance across the three metrics relative to the univariate case.
To reduce variance due to randomness in which time series are selected for the statistical accuracy evaluation, we average results over $50$ runs.

\begin{table}[]
\centering
\caption{How the three metrics change as we train using more time series ($N$) relative to the univariate case.}
\label{table:multi_vs_uni}
\fontsize{8.9pt}{9.9pt}\selectfont
\tabcolsep=0.11cm
\begin{tabular}{@{}lllllllllllll@{}}
\toprule
              Dataset       & \multicolumn{6}{c}{Synthetic I }    & \multicolumn{6}{c}{Electricity }       \\ \midrule
N                  & 1    & 10   & 40   & 80    & 160   & 400   & 1    & 10   & 40    & 80    & 150    & 370    \\
Forecasting Accuracy & 1.00 & 0.62 & 0.11 & 0.11  & 0.08  & 0.06  & 1.00 & 0.49 & 0.44  & 0.45  & 0.45   & 0.46   \\
Training Time        & 1.00 & 6.80 & 16.2 & 30.28 & 55.24 & 250.2 & 1.00 & 6.53 & 20.84 & 39.63 & 127.13 & 369.06 \\
Forecasting Latency  & 1.00 & 1.00 & 1.05 & 1.15  & 1.28  & 1.60  & 1.00 & 1.28 & 1.35  & 1.60  & 1.931  & 2.62   \\ \bottomrule
\end{tabular}
\end{table}

\vspace{2mm}

\textbf{Results.} 
Table \ref{table:multi_vs_uni} shows how the three metrics change as we train using more time series. 
The training time increases almost linearly, as we train on more time series in the synthetic ($6.80$x-$250.2$x) and electricity ($6.53$x-$369.06$x) datasets.
The prediction query latency increases moderately by {$1.001$x-$1.60$x} and {$1.28$x-$2.62$x} for the synthetic and electricity datasets respectively.  
In contrast, the forecasting error decreases by {$1.61$x-$17.59$x} and {$2.04$x-$2.27$x} for the synthetic and electricity datasets respectively.
In summary, we find that using multiple time series to train the prediction model can greatly improve accuracy with a moderate slow down in the training time and query latency.
However, beyond a certain point, the gain in statistical accuracy from adding more time series ($\sim N > 50$) becomes minimal and thus can add unnecessary overhead.
%

\subsection{Hyper-parameter Tuning - $T_0, T', \gamma$} \label{sec:appendix_comp_experiments}

We discuss how the default parameters for $T_0, T', \gamma$ are selected in tspDB -- 
we quantify the effect of each of these parameter on three metrics of interest: training time, prediction latency, and statistical accuracy. 
We find $T_0$ has minimal effect on these three metrics, and for simplicity of exposition, we simply choose it to be $T_0 = 100$ for all experiments. 

\vspace{2mm}

\textbf{Setup.} 
In this experiment, we use a synthetic time series generated from a sum of harmonics  with {$5 \times 10^7$} data points.
We initially train the prediction model with all  but the last {$10^6$} entries and then incrementally add data over a {$1000$} batches, with a {$1000$} data points in each batch;
note we do so to study the effect of $\gamma$ on training time, whose effect is limited to such data insertion patterns. 
We vary the parameter {$T' \in [10^4,10^7]$, and $\gamma \in \{0.01, 0.2,0.5,0.7,1.0\}$}. 

\vspace{2mm}

\textbf{Results.} 
Figure \ref{fig:parameter_sweep} shows tspDB's performance on the three metrics of interest as we vary the parameters. 
The x-axis reflects the time needed to train the prediction model relative to the time needed to insert the same data into a PostgreSQL table. 
The y-axis reflects the forecast query latency relative to a PostgreSQL SELECT query. 
The color of each dot corresponds to the forecasting accuracy in normalized RMSE.
We see there is a clear trade-off between query latency and accuracy.
Specifically, using small values of $T' \in [10^4, 10^5]$  yields faster but less accurate predictions  ({$3$x-$4$x} slower than SELECT queries). 
Further, it has a negative effect on training time (it is {$2$x-$6$x} the insert time). 
On the other hand, using high values of {$T' \in [5\times 10^6, 5 \times 10^7]$} yields more accurate but slower predictions ({$6$x-$7$x} slower than SELECT queries), with relatively low training time (train time is {$70\%$} of  insert time).
We choose the default setting of $T'$ to be {$2.5 \times 10^6$}, gives accurate and relatively fast predictions ({$5.1$x} slower than SELECT queries) and relatively low training time (training time is {$1.25$x} quicker compared to PostgreSQL insert time).

\hspace{4mm} We find $\gamma$'s effect on the accuracy to be minimal but has significant effect on training time for certain range of parameter choice. 
For example choosing {$\gamma = 0.01$ slows training time by $\sim 50\%$ compared to $\gamma = 1.0$}. 
Training time does not vary for {$\gamma \in \{0.5,0.7,1.0\}$, with choice of $\gamma = 0.5$} performing best.
Hence we choose {$\gamma = 0.5$.} to be the default setting. 
Note $\gamma$ has no effect on predictions  latency. 

\iftoggle{FIGURES}{
\begin{figure}[!htb]
	\centering
	\includegraphics[width=0.6\linewidth]{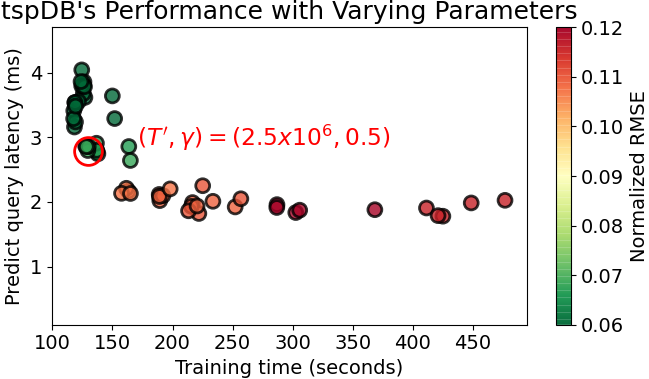}
	\caption{$T'$ and $\gamma$ can significantly affect the three metrics of interest. We choose {$T' = 2.5 \time 10^6$ and $\gamma = 0.5$} which gives the best tradeoff across all three metrics.}
	\label{fig:parameter_sweep}
\end{figure}
}

\vspace{-3mm}
\section{Conclusion}
\vspace{-2mm}
In this paper, we build an open-source, real-time time series prediction system tspDB's, which achieves state-of-the-art statistical and computational performance with respect to widely used time series algorithms/libraries, including deep-learning based methods such as LSTMs and DeepAR.
{tspDB's excellent statistical performance is likely due to real-world time series data following the time series model we introduced in Section \ref{sec:justification}.
Given that this one time series model captures a broad class of time series dynamics, and the matrix factorization algorithm we utilize is well-motivated under it, it allows us to focus our efforts on building a scalable, incremental implementation of said algorithm.
%
This along with the surprising simplicity of the algorithm used in tspDB, simply doing singular value thresholding and linear regression---compared to more complicated non-linear models such as LSTM and DeepAR---likely explains the large gains in computational performance in tspDB.
}

\hspace{4mm} 
We recall that an important design choice we make is to directly integrate tspDB on top of PostgreSQL and abstract away the ML workflow from a user, thus striving for a single interface to answer both standard DB queries and predictive queries.
We hope this serves as a benchmark for other such integrated prediction systems in the growing and exciting line of work in AutoML and Systems for ML.

\bibliography{vldb_sample}  

\begin{thebibliography}{47}
\providecommand{\natexlab}[1]{#1}
\providecommand{\url}[1]{\texttt{#1}}
\expandafter\ifx\csname urlstyle\endcsname\relax
  \providecommand{\doi}[1]{doi: #1}\else
  \providecommand{\doi}{doi: \begingroup \urlstyle{rm}\Url}\fi

\bibitem[ROD(2020)]{RODM}
Oracle machine learning for r.
\newblock
  \url{https://www.oracle.com/database/technologies/datawarehouse-bigdata/oml4r.html},
  2020.
\newblock Online; accessed 25 February 2020.

\bibitem[Rev(2020)]{RevoScaleR}
Revoscaler package.
\newblock
  \url{https://docs.microsoft.com/en-us/machine-learning-server/r-reference/revoscaler/revoscaler},
  2020.
\newblock Online; accessed 25 February 2020.

\bibitem[cxo(2020)]{cxoracle}
cxoracle.
\newblock \url{https://oracle.github.io/python-cx_Oracle}, 2020.
\newblock Online; accessed 25 February 2020.

\bibitem[ibm(2020)]{ibmdb}
Python support for ibm db2 and ibm informix.
\newblock \url{https://github.com/ibmdb/python-ibmdb}, 2020.
\newblock Online; accessed 25 February 2020.

\bibitem[idm(2020)]{idmdbR}
ibmdbr: Ibm in-database analytics for r.
\newblock \url{https://cran.r-project.org/web/packages/ibmdbR/index.html},
  2020.
\newblock Online; accessed 25 February 2020.

\bibitem[pym(2020)]{pymssql}
Pymysql documenation.
\newblock \url{https://pymysql.readthedocs.io/en/latest/}, 2020.
\newblock Online; accessed 25 February 2020.

\bibitem[Agarwal et~al.(2018)Agarwal, Amjad, Shah, and Shen]{sigmetrics}
Anish Agarwal, Muhammad~Jehangir Amjad, Devavrat Shah, and Dennis Shen.
\newblock Model agnostic time series analysis via matrix estimation.
\newblock \emph{Proceedings of the ACM on Measurement and Analysis of Computing
  Systems}, 2\penalty0 (3):\penalty0 40, 2018.

\bibitem[Agarwal et~al.(2019)Agarwal, Shah, Shen, and Song]{PCR}
Anish Agarwal, Devavrat Shah, Dennis Shen, and Dogyoon Song.
\newblock On robustness of principal component regression.
\newblock In \emph{Advances in Neural Information Processing Systems}, pages
  9889--9900, 2019.

\bibitem[Akdere et~al.(2011)Akdere, Cetintemel, Riondato, Upfal, and
  Zdonik]{thecase}
Mert Akdere, Ugur Cetintemel, Matteo Riondato, Eli Upfal, and Stanley~B Zdonik.
\newblock The case for predictive database systems: Opportunities and
  challenges.
\newblock In \emph{CIDR}, pages 167--174, 2011.

\bibitem[Arous et~al.(2019)Arous, Khayati, Cudr{\'e}-Mauroux, Zhang, Kersten,
  and Stalinlov]{RecovDB}
Ines Arous, Mourad Khayati, Philippe Cudr{\'e}-Mauroux, Ying Zhang, Martin
  Kersten, and Svetlin Stalinlov.
\newblock Recovdb: accurate and efficient missing blocks recovery for large
  time series.
\newblock In \emph{2019 IEEE 35th International Conference on Data Engineering
  (ICDE)}, pages 1976--1979. IEEE, 2019.

\bibitem[Bollerslev(1986)]{Bollerslev}
Tim Bollerslev.
\newblock Generalized autoregressive conditional heteroskedasticity.
\newblock \emph{Journal of econometrics}, 31\penalty0 (3):\penalty0 307--327,
  1986.

\bibitem[Brockwell and Davis(2013)]{timeseries1}
Peter~J Brockwell and Richard~A Davis.
\newblock \emph{Time series: theory and methods}.
\newblock Springer Science \& Business Media, 2013.

\bibitem[Brown et~al.(2005)Brown, Gans, Mandelbaum, Sakov, Shen, Zeltyn, and
  Zhao]{poisson3}
Lawrence Brown, Noah Gans, Avishai Mandelbaum, Anat Sakov, Haipeng Shen, Sergey
  Zeltyn, and Linda Zhao.
\newblock Statistical analysis of a telephone call center: A queueing-science
  perspective.
\newblock \emph{Journal of the American statistical association}, 100\penalty0
  (469):\penalty0 36--50, 2005.

\bibitem[Brown(2010)]{sciDB}
Paul~G Brown.
\newblock Overview of scidb: large scale array storage, processing and
  analysis.
\newblock In \emph{Proceedings of the 2010 ACM SIGMOD International Conference
  on Management of data}, pages 963--968, 2010.

\bibitem[Cambronero et~al.(2017)Cambronero, Feser, Smith, and Madden]{imputDB}
Jos{\'e} Cambronero, John~K Feser, Micah~J Smith, and Samuel Madden.
\newblock Query optimization for dynamic imputation.
\newblock \emph{Proceedings of the VLDB Endowment}, 10\penalty0 (11):\penalty0
  1310--1321, 2017.

\bibitem[Chollet(2015)]{Keras}
François Chollet.
\newblock keras.
\newblock \url{https://github.com/fchollet/keras}, 2015.
\newblock Online; accessed 25 February 2020.

\bibitem[Cohen et~al.(2009)Cohen, Dolan, Dunlap, Hellerstein, and
  Welton]{madskills}
Jeffrey Cohen, Brian Dolan, Mark Dunlap, Joseph~M Hellerstein, and Caleb
  Welton.
\newblock Mad skills: new analysis practices for big data.
\newblock \emph{Proceedings of the VLDB Endowment}, 2\penalty0 (2):\penalty0
  1481--1492, 2009.

\bibitem[Durbin(1961)]{poisson1}
James Durbin.
\newblock Some methods of constructing exact tests.
\newblock \emph{Biometrika}, 48\penalty0 (1-2):\penalty0 41--65, 1961.

\bibitem[Facebook(2020)]{Prophet}
Facebook.
\newblock Prophet.
\newblock \url{https://facebook.github.io/prophet/}, 2020.
\newblock Online; accessed 25 February 2020.

\bibitem[Gavish and Donoho(2014)]{Donoho}
Matan Gavish and David~L Donoho.
\newblock The optimal hard threshold for singular values is $4 /\sqrt{3}$.
\newblock \emph{IEEE Transactions on Information Theory}, 60\penalty0
  (8):\penalty0 5040--5053, 2014.

\bibitem[Gers et~al.(1999)Gers, Schmidhuber, and Cummins]{LSTM}
Felix~A Gers, J{\"u}rgen Schmidhuber, and Fred Cummins.
\newblock Learning to forget: Continual prediction with lstm.
\newblock 1999.

\bibitem[Golyandina et~al.(2001)Golyandina, Nekrutkin, and Zhigljavsky]{SSA}
Nina Golyandina, Vladimir Nekrutkin, and Anatoly~A Zhigljavsky.
\newblock \emph{Analysis of time series structure: SSA and related techniques}.
\newblock Chapman and Hall/CRC, 2001.

\bibitem[Hassani and Mahmoudvand(2013)]{mSSA1}
Hossein Hassani and Rahim Mahmoudvand.
\newblock Multivariate singular spectrum analysis: A general view and new
  vector forecasting approach.
\newblock \emph{International Journal of Energy and Statistics}, 1\penalty0
  (01):\penalty0 55--83, 2013.

\bibitem[Hassani et~al.(2013)Hassani, Heravi, and Zhigljavsky]{mSSA2}
Hossein Hassani, Saeed Heravi, and Anatoly Zhigljavsky.
\newblock Forecasting uk industrial production with multivariate singular
  spectrum analysis.
\newblock \emph{Journal of Forecasting}, 32\penalty0 (5):\penalty0 395--408,
  2013.

\bibitem[Hellerstein et~al.(2012)Hellerstein, R{\'e}, Schoppmann, Wang,
  Fratkin, Gorajek, Ng, Welton, Feng, Li, et~al.]{madlib}
Joe Hellerstein, Christopher R{\'e}, Florian Schoppmann, Daisy~Zhe Wang, Eugene
  Fratkin, Aleksander Gorajek, Kee~Siong Ng, Caleb Welton, Xixuan Feng, Kun Li,
  et~al.
\newblock The madlib analytics library or mad skills, the sql.
\newblock \emph{arXiv preprint arXiv:1208.4165}, 2012.

\bibitem[Khayati et~al.(2020)Khayati, Lerner, Tymchenko, and
  Cudr{\'e}-Mauroux]{impsurvey}
Mourad Khayati, Alberto Lerner, Zakhar Tymchenko, and Philippe
  Cudr{\'e}-Mauroux.
\newblock Mind the gap: an experimental evaluation of imputation of missing
  values techniques in time series.
\newblock \emph{Proceedings of the VLDB Endowment}, 13\penalty0 (5):\penalty0
  768--782, 2020.

\bibitem[Kim and Whitt(2014)]{poisson4}
Song-Hee Kim and Ward Whitt.
\newblock Choosing arrival process models for service systems: Tests of a
  nonhomogeneous poisson process.
\newblock \emph{Naval Research Logistics (NRL)}, 61\penalty0 (1):\penalty0
  66--90, 2014.

\bibitem[Kumar et~al.(2015)Kumar, Naughton, and Patel]{denormalizedlinear}
Arun Kumar, Jeffrey Naughton, and Jignesh~M Patel.
\newblock Learning generalized linear models over normalized data.
\newblock In \emph{Proceedings of the 2015 ACM SIGMOD International Conference
  on Management of Data}, pages 1969--1984, 2015.

\bibitem[Lewis(1965)]{poisson2}
Peter~AW Lewis.
\newblock Some results on tests for poisson processes.
\newblock \emph{Biometrika}, 52\penalty0 (1-2):\penalty0 67--77, 1965.

\bibitem[Mayfield et~al.(2010)Mayfield, Neville, and Prabhakar]{ERACER}
Chris Mayfield, Jennifer Neville, and Sunil Prabhakar.
\newblock Eracer: a database approach for statistical inference and data
  cleaning.
\newblock In \emph{Proceedings of the 2010 ACM SIGMOD International Conference
  on Management of data}, pages 75--86, 2010.

\bibitem[mldb.ai()]{mldb.ai}
mldb.ai.
\newblock Mldb.
\newblock \url{https://mldb.ai}.
\newblock Online; accessed 25 February 2020.

\bibitem[of~Transportation(2011)]{ucitraffic}
California~Department of~Transportation.
\newblock {UCI} machine learning repository - pems-sf data set.
\newblock \url{https://archive.ics.uci.edu/ml/datasets/PEMS-SF}, 2011.
\newblock Online; accessed 25 February 2020.

\bibitem[Park et~al.(2019)Park, Qing, Shen, and Mozafari]{blinkML}
Yongjoo Park, Jingyi Qing, Xiaoyang Shen, and Barzan Mozafari.
\newblock Blinkml: Efficient maximum likelihood estimation with probabilistic
  guarantees.
\newblock In \emph{Proceedings of the 2019 International Conference on
  Management of Data}, pages 1135--1152, 2019.

\bibitem[Ratner et~al.(2019)Ratner, Alistarh, Alonso, Andersen, Bailis, Bird,
  Carlini, Catanzaro, Chayes, Chung, et~al.]{MLSys}
Alexander Ratner, Dan Alistarh, Gustavo Alonso, David~G Andersen, Peter Bailis,
  Sarah Bird, Nicholas Carlini, Bryan Catanzaro, Jennifer Chayes, Eric Chung,
  et~al.
\newblock Mlsys: The new frontier of machine learning systems.
\newblock \emph{arXiv preprint arXiv:1904.03257}, 2019.

\bibitem[Saad and Mansinghka(2016)]{BayesDBimp}
Feras Saad and Vikash~K Mansinghka.
\newblock A probabilistic programming approach to probabilistic data analysis.
\newblock In \emph{Advances in Neural Information Processing Systems}, pages
  2011--2019, 2016.

\bibitem[Salinas et~al.(2019)Salinas, Flunkert, Gasthaus, and
  Januschowski]{DeepAR}
David Salinas, Valentin Flunkert, Jan Gasthaus, and Tim Januschowski.
\newblock Deepar: Probabilistic forecasting with autoregressive recurrent
  networks.
\newblock \emph{International Journal of Forecasting}, 2019.

\bibitem[Sen et~al.(2019)Sen, Yu, and Dhillon]{Amazon}
Rajat Sen, Hsiang-Fu Yu, and Inderjit~S Dhillon.
\newblock Think globally, act locally: A deep neural network approach to
  high-dimensional time series forecasting.
\newblock In \emph{Advances in Neural Information Processing Systems}, pages
  4838--4847, 2019.

\bibitem[Shah et~al.(2018)Shah, Burle, Doshi, Huang, and Rengarajan]{PQL}
Devavrat Shah, Sai Burle, Vishal Doshi, Ying-zong Huang, and Balaji Rengarajan.
\newblock Prediction query language.
\newblock In \emph{2018 56th Annual Allerton Conference on Communication,
  Control, and Computing (Allerton)}, pages 611--616. IEEE, 2018.

\bibitem[Shang et~al.(2019)Shang, Zgraggen, Buratti, Kossmann, Eichmann, Chung,
  Binnig, Upfal, and Kraska]{alpine}
Zeyuan Shang, Emanuel Zgraggen, Benedetto Buratti, Ferdinand Kossmann, Philipp
  Eichmann, Yeounoh Chung, Carsten Binnig, Eli Upfal, and Tim Kraska.
\newblock Democratizing data science through interactive curation of ml
  pipelines.
\newblock In \emph{Proceedings of the 2019 International Conference on
  Management of Data}, pages 1171--1188, 2019.

\bibitem[Sparks et~al.(2015)Sparks, Talwalkar, Haas, Franklin, Jordan, and
  Kraska]{Tupaq}
Evan~R Sparks, Ameet Talwalkar, Daniel Haas, Michael~J Franklin, Michael~I
  Jordan, and Tim Kraska.
\newblock Automating model search for large scale machine learning.
\newblock In \emph{Proceedings of the Sixth ACM Symposium on Cloud Computing},
  pages 368--380, 2015.

\bibitem[Sparks et~al.(2017)Sparks, Venkataraman, Kaftan, Franklin, and
  Recht]{sparkdec}
Evan~R Sparks, Shivaram Venkataraman, Tomer Kaftan, Michael~J Franklin, and
  Benjamin Recht.
\newblock Keystoneml: Optimizing pipelines for large-scale advanced analytics.
\newblock In \emph{2017 IEEE 33rd international conference on data engineering
  (ICDE)}, pages 535--546. IEEE, 2017.

\bibitem[timescaleDB()]{timescale}
timescaleDB.
\newblock Time-series data: Why (and how) to use a relational database instead
  of nosql.
\newblock URL \url{https://blog.timescale.com/}.

\bibitem[Trindade(2015)]{ucielec}
Artur Trindade.
\newblock {UCI} machine learning repository - individual household electric
  power consumption data set.
\newblock
  \url{https://archive.ics.uci.edu/ml/datasets/ElectricityLoadDiagrams20112014},
  2015.
\newblock Online; accessed 25 February 2020.

\bibitem[Wilson et~al.(2008)Wilson, Raj, and Smaragdis]{matfac2}
Kevin~W Wilson, Bhiksha Raj, and Paris Smaragdis.
\newblock Regularized non-negative matrix factorization with temporal
  dependencies for speech denoising.
\newblock In \emph{Ninth Annual Conference of the International Speech
  Communication Association}, 2008.

\bibitem[WRDS(2020)]{WRDS}
WRDS.
\newblock The trade and quote (taq) database.
\newblock
  \url{"https://wrds-www.wharton.upenn.edu/pages/support/data-overview/wrds-overview-taq/"},
  2020.
\newblock Online; accessed 25 February 2020.

\bibitem[Yu et~al.(2016)Yu, Rao, and Dhillon]{TRMF}
Hsiang-Fu Yu, Nikhil Rao, and Inderjit~S Dhillon.
\newblock Temporal regularized matrix factorization for high-dimensional time
  series prediction.
\newblock In \emph{Advances in neural information processing systems}, pages
  847--855, 2016.

\bibitem[Zha and Simon(1999)]{zha1999updating}
Hongyuan Zha and Horst~D Simon.
\newblock On updating problems in latent semantic indexing.
\newblock \emph{SIAM Journal on Scientific Computing}, 21\penalty0
  (2):\penalty0 782--791, 1999.

\end{thebibliography}

\newpage 

\appendix

\section{Appendix Organization } \label{sec:appendix_org}
The appendix consists of two main sections.

\vspace{2mm}

\textbf{Related Work.}
In Appendix \ref{sec:related_work}, we discuss prior related work both with respect to time series analysis and the interplay between ML and DBs.

\vspace{2mm}
\textbf{Experimental Setup Details.}
In Appendix \ref{sec:appendix_exp_setup}, we detail the experimental setup used, in particular: the datasets used; the computational infrastructure and DB configurations used, the hyper-parameters used for tspDB and the other time series algorithms compared against; and the statistical accuracy metrics used.
%

\section{Related Work}\label{sec:related_work}

\textbf{Predictive Modeling and DBs.} 
As stated earlier, our work is in line with exciting recent efforts, especially in industry, exploring the potential of DB management systems to support ML algorithms~\cite{cxoracle, pymssql, ibmdb, RODM, RevoScaleR, idmdbR, madlib, mldb.ai}.
In a complementary vein, there has been a line of work to automate the process of choosing an optimal ML workload for a given task \cite{Tupaq, alpine}.
There has also been a line of work to exploit DBs to make certain key computationally expensive steps that are pervasive in training ML algorithms, much more efficient~\cite{blinkML, sparkdec, madskills, sciDB, denormalizedlinear}.
Further, there have been attempts to provide a declarative SQL-like language for prediction \cite{PQL, madlib, thecase}.
tspDB is very much in line with these efforts.
However, crucially rather than facilitating the use of a variety of ML algorithms or sub-routines, we take the stance of abstracting the entire ML algorithm from the user, and providing a single interface to answer both standard DB queries and predictive queries.
This is in line with \cite{Tupaq, alpine}.
Our focus is specifically on prediction with time series data, an important part of ML with crucial applications, that has not yet been directly tackled.
To justify abstracting the ML algorithm from the user, we comprehensively test both the statistical and computational performance of tspDB compared to state-of-the-art alternatives, and show promising results.
We note that just the first prediction task, imputation, by itself has received considerable attention in the systems for ML community. 
For example, in \cite{impsurvey}, they conduct a thorough empirical evaluation of twelve prominent imputation techniques and in \cite{imputDB, RecovDB, ERACER, BayesDBimp}, they explore systems which directly integrate imputation of missing values as an in-built functionality.
In addition to imputation, tspDB also adds forecasting predictive functionality. And, importantly it does both with uncertainty quantification.

\vspace{2mm}

\textbf{Modern Multivariate Time Series Algorithms.}
Given the ubiquity of multivariate time series analysis, we cannot possibly do justice to the entire literature. 
Hence we focus on a few techniques, most relevant to compare against empirically.
Recently, with the advent of deep learning, neural network (NN) based approaches have been the most popular, and empirically effective. Some industry standard NN methods include LSTMs and DeepAR (for an industry leading NN library for time series analysis, see \cite{DeepAR}).

\vspace{2mm}

\textbf{Matrix Factorization Based Time Series Algorithms.}
The line of time series analysis most closely associated with the algorithm we propose is where a matrix is constructed from a time series, and some form of matrix factorization is done (see \cite{matfac2, TRMF}). 
A particularly relevant algorithm is Singular Spectrum Analysis (SSA) (see \cite{SSA}).
The main steps of SSA are: 
Step 1 - create a Hankel matrix from univariate time series data; 
Step 2 - do a Singular Value Decomposition (SVD) of it; 
Step 3 - group the singular values based on user belief of the model that generated the process; 
Step 4 - perform diagonal averaging to “Hankelize” the grouped rank-1 matrices outputted from the SVD to create a collection of time series; 
Step 5 - learn a linear model for each “Hankelized” time series for the purpose of forecasting. 
A generalization of SSA to multivariate time series data is called multivariate SSA (mSSA); 
here the only change is in Step 1, where a Hankel matrix is constructed for each time series, and the various matrices are concatenated column-wise to create a ``stacked'' Hankel \cite{mSSA1, mSSA2}.

\vspace{1mm}

In \cite{sigmetrics}, the authors show that a simple variant of SSA, where close to no hyper-parameter tuning is required, is both theoretically and empirically effective.
They show that by using the Page matrix instead of the Hankel (see Section \ref{sec:overview} for the definition), doing a single SVD on it, and subsequently learning a single linear forecaster suffices. 
This is in contrast to the user having to specify how the singular values need to be grouped in the original SSA method.

\vspace{1mm}

The multivariate time series algorithm we introduce is a variant of mSSA and a generalization of the univariate time series algorithm proposed in \cite{sigmetrics}  -- 
inspired by mSSA, the core data structure for multivariate time series data we analyze is the ``stacked'' Page matrix (the matrix induced by column-wise concatenation of the Page matrices constructed for each time series);
in line with \cite{sigmetrics}, we show that even for multivariate time series data, surprisingly one only needs to do a single SVD and learn a single linear forecaster to produce effective predictions. 
See Section \ref{sec:justification} for a theoretical justification for why our proposed variant of mSSA works, something which importantly has been absent from the literature. 

\vspace{1mm}

Another popular method in the literature is TRMF (see \cite{TRMF}) and one we directly compare against due to its popularity and empirical effectiveness for both imputation and forecasting.

\vspace{2mm}

\textbf{Time-Varying Variance Estimation.}
The time-varying variance is a key input parameter in many sequential prediction algorithms themselves. 
For example in control systems, the widely used Kalman Filter uses an estimate of the per step variance for both filtering and smoothing. 
Similarly in finance, the time-varying variance of each financial instrument in a portfolio is necessary for risk-adjustment. 
The key challenge in estimating the variance of a time series (which itself might very well be time-varying) is that unlike the actual time series itself, we do not get to directly observe the variance. 
Despite the vast time series literature, existing algorithms to estimate time-varying variance are mostly heuristics and/or make restrictive, parametric assumptions about how the variance (and the underlying mean) evolves; for example “Auto Regressive Conditional Heteroskedasticity” (ARCH) and  “Generalized Auto Regressive Conditional Heteroskedasticity” (GARCH) models. 
See \cite{Bollerslev}.



 
\section{Experimental Setup} \label{sec:appendix_exp_setup}
 \vspace{2mm}

\subsection{Datasets Description}\label{sec:data_desc}
Throughout the experiments, we use three real-world datasets that are standard benchmarks in time series analysis as well as three synthetic datasets. 
In this section, we describe these datasets in details. 

 \vspace{2mm}

{\bf Electricity Dataset. }
Obtained from the UCI repository, it records 15-minutes electricity loads of 370 households (\cite{ucielec}). 
We follow the preprocessing done in \cite{TRMF, Amazon, DeepAR}: data is aggregated into hourly intervals; the first 25968 time-points are used for training; and day-ahead forecasts for the next seven days (i.e. 24-step ahead for 7 windows) are made.

\vspace{1mm}
{\bf  Traffic Dataset. }
Obtained from the UCI repository, it records occupancy rate of traffic lanes in San Francisco (\cite{ucitraffic}). 
We follow the preprocessing done in \cite{TRMF, Amazon, DeepAR}: data is aggregated into hourly intervals; first 10392 time-points are used for training; day-ahead forecasts for the next seven days (i.e. 24-step ahead for 7 windows) are made.

\vspace{1mm}
{\bf Financial Dataset.}
Obtained from Wharton Research Data Services (WRDS), it records average daily stocks prices of 839 companies from October 2004 -- November 2019 \cite{WRDS}. 
To limit number of time series for ease of experimentation, we remove stocks with average prices below 30\$ across the available period, and those with null values.
This preprocessing gives 839 time series (i.e. stock prices) each with 3993 daily readings.  
We use the first 3813 time points for training. 
For forecasting, we forecast 180 days ahead one day at a time (i.e. 1-step ahead forecast for 180 windows).
We do so as this is a standard goal in finance.

\vspace{1mm}
{\bf  Synthetic Dataset I (Mean Estimation). }
We generate observations of $n \times m$ time series over $T$ observations $X \in \mathbb{R}^{ n \times m \times T} $ by first randomly generating the two vectors $U \in \mathbb{R}^{n} =[u_1,\dots, u_n]$ and $V \in \mathbb{R}^{m} =[v_1, \dots, v_n]$. 
Then, we generate r mixtures of harmonics where each mixture $g_k(t), k \in [r],$ is generated as: $g_k(t) = \sum_{h=1}^4 \alpha_h \cos(\omega_h t/T) $
where the parameters are selected randomly such that  $\alpha_h \in [-1,10]$ and $\omega_h \in [1,1000]$.
Each observation is constructed as follows:
$
X_{i,j}(t) = \sum_{k=1}^r u_i  v_j g_k(t)
$
where $r$ is the tensor rank, $i \in [n]$, $ j\in [m]$. 
In our experiment, we select $n = 20$, $m = 20$, $T = 15000$, and $r =4$.  
This gives us 400 time series each with 15000 observations.
In the forecasting experiments, we use the first 14000 points for training.
The goal here is to do 10-step ahead forecasts for the final 1000 points.

\vspace{1mm}
{\bf  Synthetic Dataset II (Variance Estimation). }
With $k \in \{1,2,3,4\}$, we generate three different sets of time series as follows:
(i) four harmonics $g^{har}_k(t) = \sum_{i=1}^4 \alpha_i \cos(\omega_i t/T) $ where  $\alpha_i \in [-1,10]$ and $\omega_i \in [1,1000]$;
(ii) four AR processes $g^{AR}_k(t) = \sum_{i=1}^3\phi_i  g^{AR}_k(t-i) + \epsilon(t) $ where  $\epsilon(t) \sim \mathcal{N}(0,0.1) $, $\phi_i \in [0.1,0.4]$;
(iii) four trends: $g_k^{trend}(t) =  \eta t$ where  $\eta \in [10^{-4},10^{-3}]$.
Then we generate a tensor by sampling two random vectors  $U \in \mathbb{R}^{20} =[u_1,\dots, u_{20}]$ and $V \in \mathbb{R}^{20} =[v_1, \dots, v_{20}]$.
Next we generate three $20\times 20 \times 1500$ tensors as follow: 
 (i) A mixture of harmonics:  $F^{1}_{i,j}(t) = \sum_{k=1}^4 u_i  v_j g^{har}_k(t)$;  
 (ii) A mixture of harmonics + trend: $F^2_{i,j}(t) =  \sum_{k=1}^4 u_i  v_j (g^{har}_k(t)+ g^{trend}_k(t))$; 
 (iii) A mixture of harmonics + trend+ AR: $F^3_{i,j}(t) =   \sum_{k=1}^4 u_i  v_j ( g^{trend}_k(t)+ g^{har}_k(t) + g^{AR}_k(t))$.
Here $i, j \in [1, \dots, 20]$.
For each tensor, we normalize its values to be between 0 and 1 and  use three observations models to generate three tensors from each:
(i) Gaussian: $ X^q_{ij}(t)  \sim \mathcal{N}(F^{1}_{i,j}(t), F^{q}_{i,j}(t))$;
(ii) Bernoulli: $X^q_{ij}(t) \sim  \text{Bernoulli}(F^{q}_{i,j}(t))$; %
(iii) Poisson:  $X^q_{ij}(t) \sim  \text{Pois}(F^{q}_{i,j}(t))$. 
Where  $q \in \{1,2,3\}$. 
Note, this give us a total of nine observation tensors for our variance experiments.
For the forecasting experiments, we use the first 14000 points for training and the last 1000 time steps for testing.

\vspace{1mm}
%

{\bf  Synthetic Dataset III (Robustness to Observation Models). }
We generate a $f$ as a sum of harmonics; $f(t) = \sum_{i=1}^4 \alpha_i \cos(\omega_i t/T) $, where $t \in [T]$, $T = 100000$, $\alpha_i \in [-1.5,1.5]$ and $\omega_i \in [1,100]$ (randomly selected).
We normalize $f(t)$ to be between 0 and 1, and then generate three different observation models:
(i) $X_{Gauss}(t) = f(t) + \epsilon(t)$, where $\epsilon(t) \sim \mathcal{N}(0, 0.5)$ (float);
(ii) $X_{Bernoulli}(t) \sim  \text{Bernoulli}(f(t)) $, i.e. each observation  at time t follows a Bernoulli distribution with mean $f(t)$ (boolean);
(iii) $X_{Pois}(t) \sim  \text{Pois}(f(t)) $, i.e. each observation  at time t follows a poisson distribution with mean $\lambda = f(t)$ (integer). %
The goal is to recover (i.e. impute) and forecast $f(t)$ for each set of observations. 
Note that we use the first $96000$ observations for training, and the last $4000$ points for testing. 

\vspace{1mm}

\subsection{Machine and DB Configuration.}  \label{sec:system_config}
As we said earlier, we use an Intel Xeon  E5-2683 machine with 16 cores, 132 GB of RAM, and an SSD storage in all experiments. 
In Table \ref{DBSpec},  we detail the relevant settings used for PostgreSQL 12.1. 

\begin{table}[h]
	\caption{The used configurations for PostgreSQL 12.1.}
	\label{DBSpec}
	\centering
	\begin{tabular}{@{}ll@{}}
		\toprule
		Parameter & Value \\ \midrule
		Shared\_buffers & 30GB \\
		maintenance\_work\_mem & 2GB  \\
		effective\_cache\_size & 80GB  \\
		default\_transaction\_isolation & `read committed' \\
		wal\_buffers & 16MB \\
		max\_parallel\_workers & 16  \\
		max\_worker\_processes & 16 \\
		work\_mem & 54MB \\ \bottomrule
	\end{tabular}
\end{table}

\subsection{tspDB Hyperparameters}  \label{sec:hyperparameters_selection}
All experimental results are produced using the default settings in the open-source implementation of tspDB.
In particular, the hyperparameters of tspDB are selected as follow:

\vspace{1mm}

\textbf{L.} 
We choose L using guidance from the analysis in \cite{sigmetrics}, which requires $L\leq\bar{P}$. 
Specifically in tspDB, $L = \bar{P} / 10$.

\vspace{1mm}

\textbf{Number of Singular Values ($k, k_1,k_2$).}  
This choices is done in a data-driven manner. 
Specifically, we follow the optimal procedure proposed in \cite{Donoho} for choosing a threshold for HSVT.
The procedure depends on the median of the singular values and the shape of the Page matrix.
For more details, refer to  \cite{Donoho}.

\vspace{1mm}

\textbf{``Meta-Algorithm'' Parameters ($T_0, T', \gamma$).}  
The default parameters we choose are $T_0 = 100, \ T' = 2.5\times10^6, \ \gamma = 0.5$.
For details on how these parameters are chosen, refer to Section \ref{sec:appendix_comp_experiments}.

\subsection{Benchmarking Algorithms Parameters and Settings}\label{sec:exp_appendix_methods}
In this section, we  describe the hyper-parameters/implementation used for each method we compare with.

\vspace{1mm}
{\bf DeepAR. } We use the default parameters of ``DeepAREstimator'' algorithm provided by the GluonTS package.
In variance forecasting, we compare with DeepAR as it natively had functionality to produce prediction intervals.
It uses a Monte Carlo approach that samples the estimated posterior to produce multiple samples of the time series trajectories.
We take the variance of these samples as an estimate of the forecasted variance.

\vspace{1mm}
{\bf LSTM.} Across all datasets, we use a LSTM network with three hidden layers each, with 45 neurons per layer, as is done in~\cite{Amazon}. We use the Keras implementation of LSTM~\cite{Keras}.
We train the network with a batch size of 128 for 100 epochs. 
%

\vspace{1mm}
{\bf Prophet. } We used Prophet's Python library with the default parameters selected~\citep{Prophet}. 

\vspace{1mm}
{\bf TRMF. } 
We use the implementation provided by the authors (\cite{TRMF}) in  Github. 
We use $k = 60$ for the electricity, $k = 40$ for the  traffic, $k =20$ for the financial  and synthetic data and variance experiments (selected via cross-validation);
$k$ represents the chosen rank of the $T\times N$ time series matrix. 
Another hyper-parameter, the lag index is set to include the last day and same weekday in the last week for the traffic and electricity data. 
For the financial and synthetic data, the lag index include the last 20 points.
To perform variance imputation, we adapt TRMF similar to the extension used in tspDB as follows:
\begin{enumerate}
\item Impute the time series $X_n(t), X^2_n(t)$ to produce estimates $ \hat{f_n}^{\text{TRMF}}(t), \widehat{f_n^2 + \sigma_n^2}^{\text{TRMF}}(t)$ respectively;  
\item Output $\hat{\sigma_n^2}^{\text{TRMF}} = \widehat{f_n^2 + \sigma_n^2}^{\text{TRMF}}(t) -\hat{f}^{\text{TRMF}}_n(t)^2 $.
\end{enumerate}

\subsection{Accuracy Score Metrics}\label{sec:score_metric}
In this section, we provide detailed description for the accuracy metrics used throughout the experiments. 

 \vspace{2mm}
 
{\bf Normalized Root Mean Squared Error (NRMSE). }  
Throughout all experiments, NRMSE is used as an accuracy metric.
In particular, we rescale each time series to have a zero mean and unit variance before calculating the RMSE.
This is done to avoid over-weighting the error in time series with larger magnitudes or greater variance.

 \vspace{2mm}
 
{\bf  Weighted Borda Count (WBC). } 
In Table \ref{table:full} and Table \ref{table:stat}, we use WBC to report the algorithms' performance across different experiments. 
This score is inspired from Borda count --  a commonly used method to rank a list of candidates.
Specifically, rather than choosing a single candidate, voters rank all candidates in order of preference. 
Then, each candidate is given a number of points that corresponds to the number of candidates ranked lower.

\hspace{4mm} We use a weighted version of the standard Borda Count, where we rank algorithms (i.e. candidates) based on pairwise comparisons of the normalized root mean squared error (NRMSE) across experiments (i.e. voters). 
We use this metric as it better captures the relative performance across all experiments and methods, unlike a simple statistic such as the mean or the median. 
Particularly, let $\mathcal{A}$ represent the set of algorithms used in the experiments. 
For example, in the mean forecasting experiments, $\mathcal{A} = \{\text{tspDB}, \text{ DeepAR}, \text{ LSTM}, \text{ Prophet}, \text{ TRMF}\}$. 
Let $\mathcal{X}$ represent the set of all experiments across different  noise levels ($\sigma$), and fraction of missing values ($p$).
Let $e(a,x)$ represent the NRMSE of algorithm $a \in \mathcal{A}$ in experiment $x \in \mathcal{X}$. 
Then the Weighted Borda Count of algorithm $a \in \mathcal{A}$ is:
\[
\text{WBC}(a)= \frac{1}{|\mathcal{A}| - 1} \sum_{a' \in \mathcal{A}: a' \ne a } \left(  \frac{1}{|\mathcal{X}|}  \sum_{x \in \mathcal{X}} \frac{e(a',x)}{e(a,x)+e(a',x)} \right)
\]
WBC ranges  between $0$ and $1$ and captures the relative performance against alternative algorithms. 
A score of $0.5$ indicates identical performance to the average of all other algorithms, a score of $0 / 1$ indicates it performs ``infinitely" worse / better, respectively.

\end{document}